\renewcommand{\footnotetextcopyrightpermission}[1]{\thankses}%}
\newcommand{\PreserveBackslash}[1]{\let\temp=\\#1\let\\=\temp}
\newcolumntype{C}[1]{>{\PreserveBackslash\centering}p{#1}}
\def\T{\mathcal{T}} %classes of tribes
\def\Tn#1{\mathcal{T}^{(#1)}} %for k tribes
\def\Tstar{\mathcal{T}^{(*)}} %all partition functions
\def\P{N} %set of players in a game
\def\PoT{\mathrm{PoT}} %price of tribalism or tribal price of anarchy?
\def\PoA{\mathrm{PoA}} %price of anarchy
\def\vs{\textbf{s}}%strategy vectors
\def\G{\mathcal{G}}
\def\vec#1{\textbf{#1}}
\renewcommand{\epsilon}{\varepsilon}
\newcommand{\s}{\vec{s}}
\newcommand{\R}{\mathds{R}}
\newcommand{\ycomment}[1]{}%{\noindent \textbf{\texttt{\small \color{orange} $(*\,$Y: #1$\,*)$}}}
\setlist[description]{font=\normalfont\itshape,itemsep=0mm}
\begin{document}

\title{The Impact of Tribalism on Social Welfare}

\author{Seunghee Han}
\affiliation{%
  \institution{Department of Computer Science, Cornell University}
  \country{USA}
}
\email{sh969@cornell.edu}
\author{Matvey Soloviev}
\affiliation{%
  \institution{Department of Computer Science, Cornell University}
  \streetaddress{107 Hoy Rd}
  \city{Ithaca}
  \state{NY}
  \postcode{14853}
  \country{USA}}
\email{msoloviev@cs.cornell.edu}
\author{Yuwen Wang}
\affiliation{%
  \institution{Department of Mathematics, Cornell University}
  \streetaddress{310 Malott Hall}
  \city{Ithaca}
  \state{NY}
  \postcode{14853}
  \country{USA}}
\email{ywang@math.cornell.edu}

\begin{abstract}
We explore the impact of mutual altruism among the players belonging to the same set -- their \emph{tribe} -- in a partition of all players in arbitrary strategic games upon the quality of equilibria attained. To this end, we introduce the notion of a \emph{$\tau$-tribal extension} of an arbitrary strategic game, in which players' subjective cost functions are updated to reflect this, and the associated \emph{Price of Tribalism}, which is the ratio of the social welfare of the worst Nash equilibrium of the tribal extension to that of the optimum of social welfare. We show that in a well-known game of friendship cliques, network contribution games as well as atomic linear congestion games, the Price of Tribalism is higher than the Price of Anarchy of either the purely selfish players or fully altruistic players (i.e. ones who seek to maximise the social welfare). This phenomenon is observed under a variety of equilibrium concepts.
In each instance, we present upper bounds on the Price of Tribalism that match the lower bounds established by our example.

\end{abstract}

\maketitle

\section{Introduction}
\def\E{\mathbb{E}}

According to the standard narrative around the concept of Nash equilibrium, one of its great contributions is that it shed light on why multi-agent systems in the real world often ``race to the bottom'', or otherwise fail to exhibit behaviour anywhere near the social optimum.
Perhaps fortunately, the picture suggested by the theory is not always reflected in real-world systems, which often appear to stabilise in states that are better than self-interested Nash equilibria. On the other hand, many a carefully designed political and economic system fails to deliver on its theoretical promises in reality.

Beside the ``spherical cow'' class of model-reality disagreements such as computational limitations and insufficient rationality of the agents,
the good half of this discrepancy is often rationalised by saying that players exhibit a degree of altruism %todo 07
-- that is, they seek to optimise not just their own welfare but some weighted combination of it with the sum of the welfare of all players. This approach may partially explain the more favourable dynamics we observe. However, recent results \cite{cara,chen} demonstrate that in some cases, altruism can give rise to equilibria that are even worse than those that exist if all players are purely self-interested.

Once identified, this might not seem that unrealistic: for instance, all historical industrial revolutions, implemented by arguably selfish agents seeking to maximise profit, were accompanied by a temporary dip in social welfare \cite{industrialisation}. A society whose members are altruistic but do not coordinate may therefore never have implemented these changes, remaining stuck in the pre-industrial local optimum without electricity, mass production or modern medicine.

Looking at the workings of the lower bounds for ``altruistic anarchy'', we find that the ways in which altruistic players get stuck in local optima appear quite different from those enabling bad selfish equilibria. Could it then be that there are realistic settings in which both mechanisms occur together?

Tribalism and political polarisation are often argued to be a feature of human interactions that predates those interactions even being, strictly speaking, human at all, but according to news media and sociological analyses alike, their impact on public life in Western societies has been steadily increasing
\cite{economist,polarisation-twitter,american-polarisation,polarisation-dynamics}. A tribalistic agent, broadly speaking, is concerned with the welfare of other agents belonging to the same tribe, rather than the overall social welfare of everyone participating in the system. A game or system in which the agents are tribalistic, then, could be said to exhibit both altruism (within a tribe) and selfishness (in how the tribes interact with each other). Inspired by the failure of real-world systems, we set out to investigate if this mixture of altruism and selfishness could in fact lead to even worse outcomes than either altruism or selfishness alone.

We find that this is indeed the case. In the following sections, we will show that tribalism leads to a greater Price of Anarchy than either altruism or selfishness in %, among others, %todo 05 %ms: I guess this can be removed
a folklore model of friend cliques, network contribution games \cite{contribution-games} and atomic linear routing games \cite{og-atomicrouting}. In each case, we also present upper bounds for the tribal Price of Anarchy that match the lower bounds demonstrated by our examples.

\section{Main results}

\subsection{Definition of tribalism}

Our definition of games in which the players exhibit tribalism is designed to resemble the definition of $\alpha$-\emph{altruistic extensions} that
 \cite{chen} make for their analysis of universal altruism.
An analogous definition can be made for utility-maximisation games.

We will represent cost-minimisation games $G$ as the triple $(\P, (\Sigma_i)_{i \in \P}, (c_i)_{i\in \P})$, where $\P$ is the
set of players, $\Sigma_i$ are the strategies available to player $i$ and $c_i(\vec{s})$ is the cost for player $i$
when the vector of strategies chosen by all players is $\vec{s}\in \prod_{i\in \P} \Sigma_i$. We will use $(t;\vec{s}_{-i})$
to denote the vector $\vec{s}$ with player $i$'s entry replaced with $t$.

\begin{definition} Suppose $G=(\P, (\Sigma_i)_{i \in \P}, (c_i)_{i\in \P})$ is a finite cost-minimisation game.

Let $\tau:\P\rightarrow \mathds{N}$
be a function that assigns each player a unique tribe, identified by a natural number. The \emph{$\tau$-tribal extension of $G$}
is the cost-minimisation game $G^\tau = (\P, (\Sigma_i)_{i\in \P}, (c_i^\tau)_{i\in \P})$, where the cost experienced
by every player is the sum of costs of all players in the same tribe in the original game:
for every $i\in \P$ and $\vec{s}\in \Sigma = \Sigma_1 \times \dotsb \times \Sigma_n$,
$$ c_i^\tau(\vec{s}) = \sum_{j\in \P: \tau(i)=\tau(j)} c_j(\vec{s}). $$
\label{defn:tribalexten}
\end{definition}

 When the partition function $\tau$ is constant, our definition agrees with the one in \cite{chen} with $\alpha=1$, and we say the players are \emph{(fully) altruistic}. When $\tau(i)\neq \tau(j)$ for all $i\neq j$, $G^\tau=G$ and we say the players are \emph{selfish}.

We then define the Price of Tribalism for a class of games $\G$ and class of partition functions $\{\T_G\}_{G\in \G}$  as the supremum of ratios between the social welfare $C_G(\vec{s})=\sum_{i\in \P} c_i(\vec{s})$ of any Nash equilibrium \emph{of any $\tau$-tribal extension} for $\tau\in \T_G$
and the social optimum, i.e. the lowest attainable social cost. In other words, the PoT captures how bad a ``tribal equilibrium'' can get for any game in $\G$ and any pattern of tribal allegiance of the players therein.

The definition for utility-maximisation games is again analogous.

\begin{definition}
The pure (resp. correlated, strong, mixed...) tribal Price of Anarchy (\emph{Price of Tribalism}, PoT) of a class of games $\mathcal{G}$ and class of partition functions for each game $\T=\{\T_G\}_{G\in \mathcal{G}}$ to be
$$ \PoT(\T,\mathcal{G}) = \sup_{G\in \mathcal{G}, \tau\in \T_G} \frac{ \sup_{\vec{s}\in S_{G^\tau}} C_G(\vec{s}) }{ \inf_{\vec{s}\in \Sigma} C_G(\vec{s}) }, $$
where $S_{G^\tau}$ is the set of pure (correlated, strong, mixed...) Nash equilibria of $G^\tau$.
\label{defn:pot}
\end{definition}

We can control the tribal structures that we want to consider by choosing an appropriate class $\T$ of partition functions. We will denote the class of all functions which sort the players into exactly $k$ tribes by

$\Tn{k}_G = \{ \tau : |\tau(\P)|=k \}$, and that of all possible functions as $\Tstar_G = \bigcup_{i=1}^\infty \Tn{k}_G$. The Price of Anarchy given full altruism (i.e. in the game $G^1$ of \cite{chen}) then equals $\PoT(\Tn{1},\mathcal{G})$.

\subsection{Impact of tribalism on known games}

We first consider a folklore game %$\mathcal{F}_2$
 which is often invoked as a simple model of friendship.
This game can be seen as a special case of the \emph{party affiliation game} of \cite{fabrikant}, with positive payoffs only.
Players choose to associate with one of two cliques $A$ and $B$. Each pair of players has an associated utility of being friends $u_{ij}$, which they can only enjoy if they choose to associate with the same clique. For this game, a folklore bound we %quickly
revisit shows that the pure Price of Anarchy is $2$.

\begin{theorem} (\ref{thm:socialaltruism}; \ref{thm:pot2grouping}; \ref{thm:potkgrouping}) The pure Price of Anarchy for the social $2$-grouping game $\mathcal{F}_2$ under full altruism satisfies $$\PoT(\Tn{1},\mathcal{F}_2)=2$$ as well. However, the pure Price of Tribalism is $$\PoT(\Tstar,\mathcal{F}_2)=3.$$ %joe01: is this enough?
    More generally, in the social $k$-grouping game (i.e. with $k$ cliques) with at least $k$ tribes $\mathcal{F}_k$,  % the pure Price of Tribalism in the social $k$-grouping game is
$$\PoT(\Tn{1},\mathcal{F}_k)=k \text{\hspace{1em}and\hspace{1em}} \PoT(\Tstar,\mathcal{F}_k)=2k-1,$$
while the pure Price of Anarchy is $k$.
  \label{thm:friendbounds}
\end{theorem}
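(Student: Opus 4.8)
The plan is to view $\mathcal{F}_k$ as a utility game: a profile $\vec{s}$ assigns each player to one of $k$ cliques, player $i$'s utility is $u_i(\vec{s})=\sum_{j:\,s_j=s_i}u_{ij}$, and the social welfare $C_{\mathcal{F}_k}(\vec{s})=\sum_i u_i(\vec{s})=2\sum_{\{i,j\}:\,s_i=s_j}u_{ij}$ is twice the weight of the monochromatic edges. Writing $U=\sum_{\{i,j\}}u_{ij}$, every profile satisfies $C_{\mathcal{F}_k}(\vec{s})\le 2U$, and placing all players into one clique attains $2U$, so $\opt\le 2U$ always, with equality whenever the weighted graph fits into $k$ cliques with all edges monochromatic (in particular when it is connected). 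All four upper bounds in the statement will follow from a single ``sum over the $k$ cliques'' inequality applied to a Nash equilibrium of the relevant tribal extension, and all lower bounds from two explicit graphs.

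For the upper bound, let $\vec{s}$ be a pure Nash equilibrium of $\mathcal{F}_k^\tau$, fix a player $i$ in clique $c=s_i$ and tribe $T$, and for a clique $d$ put $x_i(d)=\sum_{j\ne i:\,s_j=d}u_{ij}$ and $y_i(d)=\sum_{j\in T\setminus\{i\}:\,s_j=d}u_{ij}$. Moving $i$ from $c$ to a clique $c'$ changes $i$'s tribal utility by exactly $(x_i(c')+y_i(c'))-(x_i(c)+y_i(c))$: $i$'s own utility changes by $x_i(c')-x_i(c)$, and $i$'s tribemates in $c'$, resp.\ $c$, collectively gain, resp.\ lose, $y_i(c')$, resp.\ $y_i(c)$. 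Equilibrium thus gives $x_i(c')+y_i(c')\le x_i(c)+y_i(c)$ for each of the $k-1$ cliques $c'\ne c$; summing these and using $\sum_d x_i(d)=D_i:=\sum_j u_{ij}$ and $\sum_d y_i(d)=D_i^T:=\sum_{j\in T}u_{ij}$ yields $D_i+D_i^T\le k\,(u_i(\vec{s})+m_i)$ with $m_i:=y_i(c)$. Summing over $i$, with $W$ the total intra-tribal weight and $M=\sum_i m_i$, this reads $2U+2W\le k\,(C_{\mathcal{F}_k}(\vec{s})+M)$. Since every intra-tribal monochromatic edge is monochromatic and is counted twice in $M$, we have $M\le\min\{2W,\,C_{\mathcal{F}_k}(\vec{s})\}$, and an elementary case split on whether $2W\le C_{\mathcal{F}_k}(\vec{s})$ converts the inequality into $2U\le(2k-1)\,C_{\mathcal{F}_k}(\vec{s})$; hence $\opt\le 2U\le(2k-1)\,C_{\mathcal{F}_k}(\vec{s})$, i.e.\ $\PoT(\Tstar,\mathcal{F}_k)\le 2k-1$. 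For singleton tribes ($D_i^T=m_i=0$) the same computation gives the stronger $D_i\le k\,u_i(\vec{s})$, hence $\opt\le 2U\le k\,C_{\mathcal{F}_k}(\vec{s})$; and since moving a player changes $C_{\mathcal{F}_k}$ by exactly twice the change in her own utility (by symmetry of $u$), $\mathcal{F}_k^1$ and $\mathcal{F}_k$ have the same pure equilibria, so this bounds both the selfish pure $\PoA$ and $\PoT(\Tn{1},\mathcal{F}_k)$ by $k$.

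For the lower bounds I would exhibit weak equilibria (perturbing the weights by $\varepsilon$ removes the ties, or one keeps them if weak equilibria are permitted). For the selfish/altruistic bound $\ge k$: take $2k$ players in $k$ pairs, make each pair a unit edge, and between every two pairs add a perfect matching of unit edges; sending pair $g$ to clique $g$ makes every player have exactly one neighbour in every clique, so it is a weak equilibrium of welfare $2k$, while the graph is connected and every player has degree $k$, whence $\opt=2U=2k^2$ and the ratio is $k$ (for $k=2$ this is the folklore $4$-cycle). For $\PoT(\Tstar,\mathcal{F}_k)\ge 2k-1$: take the complete graph $K_{2k}$ with unit weights and $k$ tribes forming a perfect matching; sending tribe $g$ to clique $g$ is a weak equilibrium of the tribal extension --- a deviating player trades her single tribemate-edge for two edges to the target clique's tribe, a wash, while stranding her tribemate --- of welfare $2k$, whereas $\opt=2U=2k(2k-1)$. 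For $k=2$ this is $K_4$ with tribes $\{1,2\}$ and $\{3,4\}$: welfare $4$ against $\opt=12$, certifying $\PoT(\Tstar,\mathcal{F}_2)\ge 3$.

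I expect the tribal upper bound to be the crux. Bounding $M$ only by $C_{\mathcal{F}_k}(\vec{s})$ gives $2U+2W\le 2k\,C_{\mathcal{F}_k}(\vec{s})$ and hence the off-by-one estimate $\opt\le 2k\,C_{\mathcal{F}_k}(\vec{s})$; obtaining the exact constant $2k-1$ forces one to keep the intra-tribal weight $W$ on the left-hand side and play the two bounds $M\le 2W$ and $M\le C_{\mathcal{F}_k}(\vec{s})$ against each other. The second delicate point is attaining the bound: the lower-bound graphs must be arranged so that every player is simultaneously indifferent between all $k-1$ alternative cliques and so that the optimum collapses every edge into one clique, which is precisely what dictates the regular ``all matchings'' gadget and, in the tribal case, the complete graph $K_{2k}$.
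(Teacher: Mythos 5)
Your proof is correct and follows essentially the same route as the paper's: the upper bounds come from summing the (tribal) Nash condition over all players and all $k-1$ alternative cliques and then bounding the intra-tribal monochromatic weight by the equilibrium welfare (your pair of bounds $M\le 2W$ and $M\le C$ collapses to the paper's single use of $M\le C$ after it discards the bichromatic intra-tribal term), and the lower bounds come from tie-based equilibria on graphs built from $k$ tribal pairs. The only cosmetic differences are that you obtain $\PoT(\Tn{1},\mathcal{F}_k)\le k$ by observing that, for symmetric weights, altruistic and selfish pure equilibria coincide --- where the paper instead re-runs the summation with $u_{ij}+u_{ji}$ --- and that your tribal lower bound uses $K_{2k}$ with unit weights in place of the paper's pairs-plus-matchings construction with weight-$2$ inter-pair edges.
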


\begin{table*}
\renewcommand{\c}[1]{ {\color{gray} #1} }
\renewcommand{\arraystretch}{1.2}
\setlength{\tabcolsep}{10pt}
\small
\begin{tabular*}{\textwidth}{C{3.2cm} | l l l }
  \toprule
  Game   &PoA     &Altruistic PoA  & $\PoT$ \\
  \midrule
  Social grouping game with 2 cliques & 2 \c{(folklore)}
                                      & 2 \c{(Thm. \ref{thm:socialaltruism})}
                                      & 3 \c{(Thm. \ref{thm:pot2grouping})} \\
  Social grouping game with $k$ cliques & $k$ \c{(Thm. \ref{thm:friendshipk})}
                                        & $k$ \c{(Thm. \ref{thm:friendshipk})}
                                        & \makecell[l]{$2k -1$ \c{(Thm. \ref{thm:potkgrouping})} } \\
  Network contribution with additive rewards & 1 \c{\cite{contribution-games}}
                                                  & 1 \c{(\cite{contribution-games}, Cor. \ref{cor:NSGsum1})}
                                                  & 2 \c{(Thm. \ref{lem:NCGsum2})} \\
  Network contribution with convex rewards & 2 \c{\cite{contribution-games}}
                                                            & 2 \c{(Thm. \ref{thm:NCGaltconvex})}
                                                            & 4 \c{(Thm. \ref{lem:NCGconv4})} \\
Atomic linear routing & $5/2$ \c{\cite{ChristodoulouK05,a-atomic-routing}}
                      & 3 \c{\cite{cara}}
                      & \makecell[l]{4 \c{(Thm. \ref{thm:alrg3k})}}  \\
  \bottomrule
\end{tabular*}

\label{tab:summary}
\end{table*}
A more involved model of friendship networks was first described by Anshelevich and Hoefer \cite{contribution-games}. In the \emph{network contribution game} $\mathcal{N}_\mathcal{F}$, we are given a social graph of vertices representing players and edges representing potential relationships between them. Each player has a fixed budget $b_i$, which they seek to allocate among the edges adjacent to them. They then receive a payoff from each edge based on a symmetric function $f_e(x,y)=f_e(y,x)\in \mathcal{F}$ of their own and the other player's contribution to that edge.

In the original paper, the authors show different bounds on the Price of Anarchy for this game depending on the form the functions $f_e$ can take: among others, for $f_e(x,y)=c_e(x+y)$ (we call this class of games $\mathcal N_+$), they show a PoA of $1$, and when $f_e(x,y)$ satisfies $f_e(x,0)=0$ and each $f_e$ is convex in each coordinate (denoted by $\mathcal N_C$), the PoA is $2$. %joe03
Moreover, instead of pure Nash equilibria, they invoke \emph{pairwise} ones, which are resilient against any pair of associated players deviating together.
In the presence of tribalism, we demonstrate that both of these bounds deteriorate.

\begin{theorem} (\ref{lem:NCGsum2}; \ref{lem:NCGconv4}; \ref{thm:NCGaltconvex}) The pure and pairwise Price of Tribalism for the network contribution game with additive rewards is $$\PoT(\Tstar,\mathcal{N}_+)=2.$$

The pure and pairwise Price of Tribalism for each of the network contribution games with
coordinate-convex reward functions is
$$ \PoT(\Tstar,\mathcal{N}_{C})=4. $$
Meanwhile, the altruistic Price of Anarchy is still $1$ for $\mathcal{N}_+$ and $2$ for $\mathcal N_C$.
\end{theorem}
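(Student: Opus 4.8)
The statement decomposes into a matching upper and lower bound on $\PoT(\Tstar,\mathcal{N}_+)$, the same for $\PoT(\Tstar,\mathcal{N}_C)$, and the two altruistic ($\Tn{1}$) statements, and I would organise all of them around one observation. Write $s_i(e)$ for the budget player $i$ places on an incident edge $e$ (so $\sum_{e\ni i}s_i(e)=b_i$). When $i$ shifts an infinitesimal amount of budget off an incident edge $e'=(i,k)$ and onto an incident edge $e=(i,j)$, the change in her tribal objective $c_i^\tau$ equals her \emph{own} marginal payoff change on $e$, weighted by $1+[\tau(i)=\tau(j)]$, minus her own marginal on $e'$, weighted by $1+[\tau(i)=\tau(k)]$ --- since a tribemate sitting at the far end of $e$ (resp.\ $e'$) also gains (resp.\ loses) that payoff. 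Because an edge has exactly two endpoints, at most one tribemate ever shares a given edge, so both weights are either $1$ or $2$, irrespective of how large the tribes are; this is why tribe size never enters the bounds. For $\mathcal{N}_+$ this is already decisive: the welfare is separable, $C_G(\vec s)=2\sum_i\sum_{e\ni i}c_e\,s_i(e)$, so the optimum is $2\sum_i b_iM_i$ with each $i$ on a steepest incident edge, $M_i:=\max_{e\ni i}c_e$; in any pure equilibrium of $G^\tau$ the observation forces $i$'s budget to lie only on edges maximising the perceived weight $(1+[\tau(i)=\tau(j)])c_e$, hence on edges with $c_e\ge M_i/2$, so $i$ contributes at least $M_ib_i$ to $C_G$ and $C_G(\vec s)\ge\sum_i b_iM_i\ge C_G(\vec s^{*})/2$. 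This gives $\PoT(\Tstar,\mathcal{N}_+)\le 2$ (hence also for pairwise equilibria, these being pure equilibria), and it yields the altruistic claim: for constant $\tau$ every edge is shared, every perceived weight is uniformly $2c_e$, so the equilibrium condition becomes ``$i$ maximises $\sum_{e\ni i}c_e s_i(e)$'', i.e.\ exactly the optimal allocation, and $\PoT(\Tn{1},\mathcal{N}_+)=1$.

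\textbf{Lower bound for $\mathcal{N}_+$.}
For the matching lower bound I would use a three-player gadget: a player $p$ with budget $b$ in a tribe $\{p,q\}$, where $q$ is a dummy of budget $0$ joined to $p$ by an edge of coefficient $1$, and $p$ is also joined by an edge of coefficient $2-\epsilon$ to a dummy $r$ forming its own tribe. The profile where $p$ spends all of $b$ on the shared coefficient-$1$ edge is a pure equilibrium of $G^\tau$, since $p$ perceives it at weight $2>2-\epsilon$; it is moreover pairwise stable, because the only move that would raise $p$'s \emph{own} payoff --- shifting budget onto the coefficient-$(2-\epsilon)$ edge jointly with $r$ --- strictly lowers $q$'s payoff and hence $p$'s tribal objective, while no pair involving only $q$ or $r$ can move any budget. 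Here $C_G$ is $2b$ at this equilibrium but $2(2-\epsilon)b$ at the optimum, giving $\PoT(\Tstar,\mathcal{N}_+)\ge 2-\epsilon$ for every $\epsilon>0$.

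\textbf{The convex case.}
For $\mathcal{N}_C$ I would adapt the pairwise Price-of-Anarchy-$2$ argument of \cite{contribution-games}: for each edge $e=(i,j)$ carrying weight in an optimum $\vec s^{*}$ one bounds $f_e(\vec s^{*})$ by what the pair $\{i,j\}$ already secures at a pairwise equilibrium $\vec s$, the crux being that $i$ (or $j$) could reallocate budget onto $e$ to reach $\vec s^{*}$ there without profit, the cost of that reallocation being controlled via convexity ($f_{e'}(x,y)\le x\,\partial_x f_{e'}(x,y)$ as $f_{e'}(0,\cdot)=0$) by $i$'s equilibrium payoff off $e$. In the tribal version that reallocation is judged against $c_i^\tau$, and by the opening observation the \emph{cost} of pulling budget off $e'=(i,k)$ may be doubled (when $k$ is $i$'s tribemate) while the \emph{gain} on $e$ need not be amplified at all (when $j$ is not), so the per-edge inequality loses a further factor $2$ and one gets $C_G(\vec s)\ge C_G(\vec s^{*})/4$, i.e.\ $\PoT(\Tstar,\mathcal{N}_C)\le 4$. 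The lower bound combines the convex instance of \cite{contribution-games} realising the factor $2$ with the gadget above --- a player locked by convexity into a ``shallow'' edge is further locked by over-valuing a tribemate-shared one --- stacking the two effects into a pairwise-stable tribal equilibrium at a $4$-fold loss. Finally $\PoT(\Tn{1},\mathcal{N}_C)=2$ because under full altruism every edge is shared, so in the argument above \emph{both} cost and gain are doubled uniformly, the extra factor cancels, and the bound $2$ of \cite{contribution-games} survives; its tightness is witnessed by a convex instance whose bad pairwise configuration is also stable under welfare-improving pair deviations.

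\textbf{Main obstacle.}
I expect the $\mathcal{N}_C$ upper bound to be the hard part: one must re-run the accounting of \cite{contribution-games} while tracking exactly which edges in each hypothetical reallocation are tribemate-shared, so that the amplification is charged once --- a clean factor $2$ --- rather than compounding, and so that the per-edge inequalities still sum to a bound on $C_G(\vec s)$ itself rather than on a degree-weighted quantity. A secondary difficulty is certifying pairwise --- not merely unilateral --- stability of the lower-bound instances, since a coordinated pair can activate an edge that no single deviator would; the gadgets must be laid out so that every seemingly profitable joint move is defeated by the collateral damage it inflicts on a third party inside one of the two deviators' tribes.
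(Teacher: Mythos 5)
Your treatment of $\mathcal{N}_+$ (both bounds), of the altruistic $\mathcal{N}_+$ claim, and of the $\mathcal{N}_C$ \emph{upper} bounds is essentially the paper's own argument. The key observation -- that a tribemate at the far end of an edge amplifies the perceived marginal reward by a factor of exactly $2$ and never more, so the loss term in the equilibrium inequality can be doubled while the gain term need not be -- is precisely the mechanism behind the paper's witness-charging proof of $\PoT(\Tstar,\mathcal{N}_C)\leq 4$ (there: $2(u_i(\s)+u_j(\s))\geq w_e(\s^*)$ per witnessed edge, summed using that each player witnesses one edge of a tight optimum), and your $\mathcal{N}_+$ gadget is the paper's three-player path up to an $\epsilon$-perturbation. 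Your accounting concern for the convex upper bound is resolved exactly as you anticipate, by charging each optimum edge to its witnesses once.

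The genuine gap is in the lower bounds for the coordinate-convex case: both $\PoT(\Tstar,\mathcal{N}_C)\geq 4$ and the altruistic lower bound of $2$ are asserted ("stacking the two effects", "witnessed by a convex instance") rather than constructed, and here the construction and its verification are the substantive content. Naively composing the selfish factor-$2$ instance with your tribal gadget does not obviously survive \emph{bilateral} deviations: a coordinated pair spanning two tribes can activate an edge that no unilateral deviator would, and one must arrange the instance so that every such joint move is blocked because it harms one deviator's tribe. The paper does this with a six-player path with rewards $\varepsilon f,\ f,\ (\tfrac12+\varepsilon)f,\ f,\ \varepsilon f$ and tribe pattern red--red--blue--blue--red--red; the critical check is that the cross-tribe pair at a payoff-$f$ edge cannot profitably deviate, which requires the explicit convexity estimate $f(a,b)+2(\tfrac12+\varepsilon)f(1,1-b) < 2(\tfrac12+\varepsilon)f(1,1)$. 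Likewise the altruistic lower bound needs its own instance (the paper uses a $4$-cycle with rewards $(1-\varepsilon)xy$ and $\tfrac12 xy$) because a selfish worst-case equilibrium need not remain an equilibrium under full altruism. Until you exhibit such instances and verify their pairwise (tribal) stability, the equalities in the statement are not established -- only the upper bounds are.
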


Finally, we will turn our attention to atomic linear routing games $\mathcal{R}$ \cite{rose-1973}, a popular class of games that model a set of players seeking to each establish a point-to-point connection over a shared network (such as the internet or roads) represented by a graph, where the cost to all players using an edge increases linearly with the number of players using it. In the case of selfish behaviour, these games are well-known to exhibit a pure Price of Anarchy of $5/2$ \cite{rose-1973}. Caragiannis et al. \cite{cara} show that in the case of universal altruism, this deteriorates to $\PoT(\Tn{1},\mathcal{R})=3$. We show matching lower and upper bounds that demonstrate that in the face of tribalism, significantly worse equilibria can arise. It should be noted that in fact, our result applies to the more general class of atomic linear congestion games.

\begin{theorem} (\ref{thm:alrg3k}) The Price of Tribalism for the atomic linear routing game is
$$\PoT(\Tn{2},\mathcal{R})=\PoT(\Tstar,\mathcal{R})=4.$$
\label{thm:routingbounds}
\end{theorem}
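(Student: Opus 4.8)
The proof has two halves: a lower bound $\PoT(\Tn{2},\mathcal{R})\ge 4$, realised by an explicit family of routing games with exactly two tribes, and a matching upper bound $\PoT(\Tstar,\mathcal{R})\le 4$ valid for every partition (indeed for every atomic linear congestion game). Since $\Tn{2}_G\subseteq\Tstar_G$ for each game $G$, we automatically have $\PoT(\Tn{2},\mathcal{R})\le\PoT(\Tstar,\mathcal{R})$, so the two halves pinch the whole chain to $4$.

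For the upper bound I would run a smoothness-style argument. Fix a pure Nash equilibrium $\vec s$ of $G^\tau$ and a social optimum $\vec s^{*}$, write $n_e,n_e^{*}$ for the loads on resource $e$ under $\vec s,\vec s^{*}$, write $m^{T}_e$ for the number of members of tribe $T$ using $e$ under $\vec s$, and (the affine case reducing to this) take $c_e(x)=a_ex$. The equilibrium condition is that, for each player $i$ with tribe $T=\tau(i)$, switching from $s_i$ to $s_i^{*}$ does not decrease the aggregate cost $\sum_{j\in T}c_j$ of $T$. Linearity makes this change explicit: besides $i$'s own cost, each tribemate on a resource $i$ newly occupies pays $a_e$ more and each tribemate on a resource $i$ abandons pays $a_e$ less -- equivalently, $i$ perceives on each of its resources the marginal quantity $a_e(n_e+m^{T}_e-1)$, and equilibrium says that moving to $s_i^{*}$ does not lower $i$'s total perceived cost. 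Writing this out, summing over all players, and collecting the double sum resource by resource yields a bound of the form $C(\vec s)=\sum_e a_e n_e^{2}\le\sum_e a_e\,\Phi\!\left(n_e,n_e^{*},(m^{T}_e)_T,(q^{T}_e)_T\right)$, where $q^{T}_e$ counts $T$'s members using $e$ at the optimum. One then closes with an elementary numerical inequality, calibrated so that the extremal per-resource configuration is $n_e=2n_e^{*}$ -- exactly the configuration produced by the lower bound -- which is what makes the constant come out to $4$.

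The crux is this calibration, and I expect it to be the main technical obstacle. Discarding the per-tribe data crudely (replacing $\sum_T q^{T}_e m^{T}_e$ by $n_e^{*}n_e$ and the ``$+1$'' contributions by $n_e^{*}$) only gives a constant strictly larger than $4$, so the way the $n_e$ equilibrium users and the $n_e^{*}$ optimal users of each resource split among the tribes must be tracked faithfully: one keeps the negative ``savings'' terms coming from abandoned resources and applies a Cauchy--Schwarz/convexity estimate to sums such as $\sum_T(m^{T}_e)^{2}$ and $\sum_T q^{T}_e m^{T}_e$. The estimate is uniform in the number of tribes -- singleton tribes recover the selfish inequality (yielding $\le 5/2$) and a single tribe the altruistic inequality of \cite{cara} (yielding $\le 3$), with the worst case $4$ occurring strictly in between -- which is precisely why the upper bound needs no restriction on $\T$.

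For the lower bound I would build an explicit atomic linear routing game with the players split into two tribes, arranged so that at the tribal equilibrium every resource in use carries exactly twice its optimal load. The design superimposes the mechanism behind the selfish $5/2$ instance of \cite{ChristodoulouK05} -- here the selfish competition is \emph{between} the two tribes -- on the mechanism behind the altruistic $3$ instance of \cite{cara} -- players held in place because the move that would help them would hurt a tribemate by more -- chaining copies of the basic gadget so that the local ratio $n_e^{2}/(n_e^{*})^{2}=4$ is realised simultaneously on all resources and no single-player deviation lowers the cost of its tribe. The equilibrium property is then a direct check with the affine costs, and comparing the social cost of this equilibrium with the optimum gives a ratio $4-o(1)$, hence $\PoT(\Tn{2},\mathcal{R})\ge 4$.
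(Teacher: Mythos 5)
Your overall strategy coincides with the paper's: a smoothness-type upper bound obtained by summing the tribal Nash conditions and collecting terms resource by resource, paired with a recursive two-tribe gadget in which each used resource carries twice its optimal load, and the reduction $\PoT(\Tn{2},\mathcal{R})\le\PoT(\Tstar,\mathcal{R})$ to pinch the chain. The structural intuitions (extremal configuration $n_e(\vs)=2n_e(\vs^*)$; the lower bound combining the selfish inter-tribe competition with the altruistic ``held in place by a tribemate'' mechanism) are exactly right. But as written neither half is proved. For the upper bound, everything hinges on the ``elementary numerical inequality'' that you explicitly defer as the main technical obstacle, and without it no constant is established. The paper's version (Lemma \ref{lem:quad}) is the two-variable integer inequality $x(y-x)+xy+x+y\le \tfrac83 y^2+\tfrac13 x^2$, applied per edge with $x=n_e(\vs)$, $y=n_e(\vs^*)$ after the per-tribe expressions have been collapsed to aggregate loads; this gives $(8/3,1/3,\tau)$-smoothness (Lemma \ref{lem:alrgsmooth}) and hence $\PoT\le(8/3)/(1-1/3)=4$ via Theorem \ref{thm:smooth}. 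You are right that the negative ``savings'' term from abandoned resources cannot simply be dropped (doing so forces $\mu\ge1$), but the paper's resolution is not the per-tribe Cauchy--Schwarz bookkeeping you anticipate: it keeps a single aggregate $-n_e(\vs)^2$ term and closes with the displayed lemma. Either way, you would need to actually state and prove some such inequality; the proposal contains none.

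For the lower bound, ``chaining copies of the basic gadget'' is not yet a construction, and the choice of coefficients is where the work lies. The paper's instance is a load-balancing game on a complete binary tree with $k$ layers of player-edges, each edge choosing one of its two endpoint-servers, with cost coefficient $(1/2)^i$ at depth $i$ and the last layer doubled, and the two edges under each node assigned to different tribes; the all-``up'' profile is a tribal Nash of cost $4k$ against an optimum of at most $k+3$. The decay rate $1/2$ is the fastest compatible with the interior Nash condition (the indifference computation $3c$ versus $3c$), and some decay is genuinely needed: with constant coefficients the profile is still an equilibrium, but the bottom layer -- empty at Nash, fully loaded at the optimum -- dominates both costs and the ratio collapses to $2$. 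Without specifying the coefficients the ``direct check with the affine costs'' you invoke cannot be performed, so this half is also a plan rather than a proof.
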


\section{Background}

The Price of Anarchy \cite{anarchy} is a widely used tool for analysing the behaviour of systems composed of autonomous agents pursuing their rational self-interest in the absence of a central coordination mechanism. However, the assumption of pure selfishness is at odds with observed human behaviour and what is predicted from evolutionary biology; social scientists have conducted experiments and offered simple models for altruism \cite{publicgood,modelaltruism}.

This observation inspires a well-explored line of inquiry

\cite{cara,chen2,chen} into how players caring about social welfare affects the Price of Anarchy. In \cite{cara}, the players of an atomic linear congestion game are taken to behave altruistically on a spectrum from zero to one: 0 being purely selfish, $1/2$ corresponding players trying to maximise social welfare and $1$ making players ``totally selfless'', so that they optimise for everybody's utility except their own.
Surprisingly, it is found that altruism actually often leads to \emph{higher} prices of anarchy than if the players behaved selfishly.

We draw heavy inspiration from the subsequent treatment by \cite{chen}, which applies a similar construction to a broader set of games. %, many of which had \emph{higher} prices of anarchy when the players were behaving altruistically than when behaving selfishly.

The authors of this work
introduce the notion of an $\alpha$-\emph{altruistic extension} of a given game, where each player $i$ has an associated altruism parameter $\alpha_i$. The $i$th player's cost is defined as a combination of $(1-\alpha_i)$ times their individual utility and $\alpha_i$ times the social welfare, so a value of $0$ represents selfishness while $1$ corresponds to full altruism. (Unlike \cite{cara}, \cite{chen} do not consider selfless players.) When $\alpha_i=1$ for all $i$, this corresponds to our notion of tribalism with every player belonging to the same tribe; however, we do not consider the possibility of partial altruism in the sense of a parameter $\alpha$ strictly between $0$ and $1$.

Many authors \cite{collusion,strongPoA,dimitri_coalition,oliroute,civilsoc} have studied how various forms of ``group behaviour'' can affect the Price of Anarchy. Of these, \cite{collusion,civilsoc} consider the ratio between the worst Nash equilibrium of the modified game to the worst selfish equilibrium (contrast to our ratio between the worst modified Nash and the social optimum), and only look at specific games (congestion games in \cite{collusion} and load balancing games in \cite{civilsoc}). The \emph{Price of Collusion} defined in \cite{collusion} in particular is closely related to the instantiation of our definition for those games, with $\mathrm{PoC} \leq \PoT \leq \mathrm{PoA}\cdot \mathrm{PoC}$.
In \cite{strongPoA,dimitri_coalition,oliroute}, the groups are taken to be able to coordinate their actions (unlike ours), but (with the exception of \cite{oliroute}) not share costs (so players are still selfish).

Much of the aforementioned work implicitly or explicitly assumes altruism to be a phenomenon that mostly occurs locally in tightly-knit groups, where it is reasonable to assume that players who care for each other's welfare can also coordinate their actions and deviate from a strategy together.
We generally think of tribes as entities in which, due to scale or other impediments, coordination between individuals is not possible (contrast with e.g. \cite{PoD}, where the groups vote on their next action), though some of our results on \emph{coordinated} Price of Tribalism in network contribution games show that bad equilibria can persist even when individuals in a tribe coordinate.
We contend that this is a reasonable assumption: human political and religious groups rarely possess any sort of central coordination mechanism (small cells or action groups that do act in concert could easily be merged into a single ``superplayer'' to be considered in our setting), and there are natural examples such as state-forming insects like bees and ants where non-coordination is implied by the absence of a communication system rich enough to express the breadth of actions available to each individual. (Cues that \emph{are} known to be used for some degree of coordination, such as scent markers, can be modelled as part of the game's payoff structure.)

A different strain of work that is based on the idea that altruism is a local phenomenon dates back to \cite{graphmodels}. This paper considers games in which the players' perceived utilities are only affected by their
own and those of their neighbours in the \emph{social graph}, representing a geographic or social proximity that leads individuals to be concerned with each other's utility.
 Related lines of work extend the social graph to model various other relations between players such as information about each other's actions and preferences (e.g. \cite{graphical-congestion,ignorance}).
The social graph may be taken to be closely related to some graph structure present in the underlying game, as in the disease spread model of \cite{windfall}, or on the contrary to be completely independent, as in the \emph{social context games} introduced in \cite{itai_social_context}. The latter, in particular, turn out to be a strictly more general model than the one considered in this paper: each player is made to minimise a fixed function of the costs experienced
by themselves and all their immediate neighbours in the social graph (called \emph{social context graph} by \cite{itai_social_context}).

A $\tau$-tribal extension can be seen as a social context game where the social graph %where members of a complete graph are connected by cliques but have no other edges. --agrammatical --ms
consists of a disjoint collection of cliques, i.e. complete subgraphs that are not connected to each other, and the aggregator function is a sum.

Although the model is on the surface quite general,

it has been found to exhibit a good amount of exploitable structure:
for example, \cite{SCsmoothness} defines smoothness for social context games, which generalises the original definition of \cite{smooth-roughgarden}.

\cite{SCpotential} tightly characterises which potential games remain potential games under all social contexts. One such example are atomic linear routing games, which we consider in Section \ref{sect:atomiclinear}.

Several authors \cite{bilo,SClinearcong,rewardsharing} explore how the Price of Anarchy behaves in this more general setting, which enables still more mechanisms to construct games with bad equilibria -- for instance, player's concern need no longer be transitive, so player A can seek to reduce the cost of player B and player B that of player C without player A having any concern for player C's cost.
Relevantly, \cite{bilo} demonstrates a lower bound and upper bound of $17/3$ for atomic linear congestion games with social context given by a general graph, which in our setting (i.e. when the graph is a collection of disjoint cliques) have a matching lower and upper bound of 4.
Due to the generality of the social context model, we consider our model to be of independent interest.

\section{Examples}

\subsection{Social grouping games}

A simple folklore example of a game with nontrivial and clearly suboptimal Nash equilibria is the \emph{social ($k$-)grouping game}. In this game, the players are nodes of a directed graph, with edge weights $u_{ij}\geq 0$ to be thought of as the benefit a friendship between players $i$ and $j$ would give player $j$. We can assume the graph to be complete, with previously absent edges having weight 0. In general, we do not assume $u_{ij}=u_{ji}$, but both our lower and upper bounds satisfy this assumption. Each player must declare their membership in one of two, or more generally $k$ ``cliques'' or ``friend groups'', and receives utility $u_i(\s)=\sum_{j:\s(i)=\s(j)} u_{ji},$ that is, the sum of benefits from all other players in the same clique.

Here, we will focus on the lower bounds $2$-clique case; for the fully general version of these theorems and proofs of the upper bounds, see Appendix \ref{sec:a.social}.

It is not hard to see that in this game, it is optimal for everyone to declare membership in the same clique, therefore being able to reap the benefits of all possible friendships. However, there exist locally optimal pure Nash equilibria that fall short of the optimum by up to a factor of two. The following theorem is known to us from private communication with {\'{E}}va Tardos. %For the sake of exposition, we will focus on games with

\begin{theorem} \label{thm:socialhomework}
 The pure Price of Anarchy for the social $2$-grouping game is $2$.
\end{theorem}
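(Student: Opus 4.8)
The plan is to prove the two directions separately: a short ``smoothness''-type summation argument showing every pure Nash equilibrium has social welfare at least $\opt/2$, and a four-player ``crossed pairs'' instance exhibiting a pure Nash equilibrium whose welfare is exactly $\opt/2$.

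First I would record the optimum. Since the benefit $u_{ij}$ is collected precisely when $i$ and $j$ declare the same clique, the profile in which all players join one common clique realises every benefit, so it is socially optimal and $\opt = \sum_{i\neq j} u_{ij}$ (ordered pairs). For the upper bound, let $\s$ be any pure Nash equilibrium; it partitions the players into two cliques, and I write $s(i)$ for the clique of player $i$. Because there is exactly one alternative clique, the Nash condition for $i$ is
$$ \sum_{j:\, s(j)=s(i)} u_{ji} \ \ge\ \sum_{j:\, s(j)\neq s(i)} u_{ji}. $$
Summing over all $i$: the left side equals $\sum_i u_i(\s)=C_G(\s)$, and the right side equals the total cross-weight $W:=\sum_{i\neq j:\, s(i)\neq s(j)} u_{ji}$. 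Since every ordered pair is either intra-clique or cross, $C_G(\s)+W=\opt$, so adding $C_G(\s)$ to both sides of the summed inequality yields $2\,C_G(\s)\ge C_G(\s)+W=\opt$, i.e. $C_G(\s)\ge \opt/2$. Note this derivation uses only the weak deviation inequality, so it applies even to equilibria in which some players are indifferent.

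For the matching lower bound I would take four players $a_1,a_2,b_1,b_2$, with a unit benefit in both directions on each of the four pairs $\{a_1,a_2\}$, $\{b_1,b_2\}$, $\{a_1,b_1\}$, $\{a_2,b_2\}$ and all other benefits $0$. Here $\opt=8$ (everyone in one clique). The profile with cliques $\{a_1,b_1\}$ and $\{a_2,b_2\}$ has welfare $4$, and it is a pure Nash equilibrium: each player currently collects a benefit of $1$ from its clique-mate and would also collect exactly $1$ from its ``diagonal'' partner upon switching, so no strictly improving deviation exists (the players are in fact indifferent). This gives ratio $2$, and together with the upper bound, $\PoA=2$.

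The argument is essentially routine; the only place needing care is the bookkeeping with possibly asymmetric benefits $u_{ij}\neq u_{ji}$, where one must consistently sum the Nash inequalities over ordered pairs so that each directed benefit is counted exactly once and lands in either $C_G(\s)$ or $W$ but not both. The four-player gadget is also the seed for the $k$-clique generalisation treated in Appendix \ref{sec:a.social}.
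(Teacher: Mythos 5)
Your proposal is correct and matches the paper's approach: the upper bound is exactly the summed Nash-inequality argument used for the general $k$-clique case in Theorem \ref{thm:friendshipk} (specialised to $k=2$), and your four-player crossed-pairs gadget is isomorphic to the paper's lower-bound instance (the weight-$1$ four-cycle of Fig. \ref{fig:friendalt}, left, equivalently the $k=2$ case of the appendix construction). Your handling of asymmetric $u_{ij}$ via ordered pairs is consistent with the paper's conventions, so there is nothing to add.
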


\begin{figure}%{R}{0pt}
\begin{center}

\begin{tikzpicture}[scale=1.1,every node/.append style={font=\scriptsize}]
 \node [inner sep=0.1em]  (a) at (0,0) {\normalsize a};
 \node [inner sep=0.1em]  (b) at (2,0) {\normalsize b};
 \node [inner sep=0.1em]  (c) at (2,1) {\normalsize c};
 \node [inner sep=0.1em]  (d) at (0,1) {\normalsize d};

 \node (A) at (-0.8,0.5) {\normalsize A};
 \draw (0,0.5) ellipse (0.5 and 1);

 \node (B) at (2.8,0.5) {\normalsize B};
 \draw (2,0.5) ellipse (0.5 and 1);

 \draw[->] (a) to [bend left=20] node[above] {$1$} (b);
 \draw[->] (b) to [bend left=20] node[left] {$1$} (c);
 \draw[->] (c) to [bend left=20] node[below] {$1$} (d);
 \draw[->] (d) to [bend left=20] node[right] {$1$} (a);
 \draw[<-] (a) to [bend right=20] node[below] {$1$} (b);
 \draw[<-] (b) to [bend right=20] node[right] {$1$} (c);
 \draw[<-] (c) to [bend right=20] node[above] {$1$} (d);
 \draw[<-] (d) to [bend right=20] node[left] {$1$} (a);

\end{tikzpicture}
\hspace{1em}
\begin{tikzpicture}[scale=1.1,every node/.append style={font=\scriptsize}]
 \node [inner sep=0.1em]  (a) at (0,0) {\normalsize \color{red} a};
 \node [inner sep=0.1em]  (b) at (2,0) {\normalsize \color{blue} b};
 \node [inner sep=0.1em]  (c) at (2,1) {\normalsize \color{blue} c};
 \node [inner sep=0.1em]  (d) at (0,1) {\normalsize \color{red} d};

 \node (A) at (-0.8,0.5) {\normalsize A};
 \draw (0,0.5) ellipse (0.5 and 1);

 \node (B) at (2.8,0.5) {\normalsize B};
 \draw (2,0.5) ellipse (0.5 and 1);

 \draw[->] (a) to [bend left=20] node[above] {$2$} (b);
 \draw[->] (b) to [bend left=20] node[left] {$1$} (c);
 \draw[->] (c) to [bend left=20] node[below] {$2$} (d);
 \draw[->] (d) to [bend left=20] node[right] {$1$} (a);
 \draw[<-] (a) to [bend right=20] node[below] {$2$} (b);
 \draw[<-] (b) to [bend right=20] node[right] {$1$} (c);
 \draw[<-] (c) to [bend right=20] node[above] {$2$} (d);
 \draw[<-] (d) to [bend right=20] node[left] {$1$} (a);
\end{tikzpicture}

\end{center}
\caption{Left: An $\mathrm{OPT}/2$ selfish equilibrium. Right: An $\mathrm{OPT}/3$ tribal equilibrium.\vspace{1em}}
\label{fig:friendalt}
\end{figure}
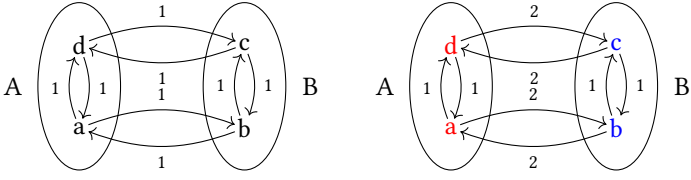
This circumstance does not change when players are fully altruistic, %\ycomment{fully altruistic is never defined},
 and in fact, the lower bound is established by the following Nash equilibrium in the same example as seen for the preceding theorem.

In Fig. \ref{fig:friendalt} on the left,

we see that by switching to the other clique, each player would gain (from the unique player that benefits them in the other group) 1 unit of utility, and lose (from the unique player that benefits them in the current group) 1 as well. Also, the net loss to the rest of the community (as the person sharing the clique with the switching player would no longer benefit from their friendship) and the net gain (as one person in the clique they are switching to would now benefit) each are 1 as well, and so switching would indeed make no difference to the player's subjective utility. On the other end, we can establish a matching upper bound with little effort.

\begin{theorem} \label{thm:socialaltruism} The Price of Tribalism for the social $2$-grouping game and constant partition functions $\tau\in \T^{(1)}$ is 2.
\end{theorem}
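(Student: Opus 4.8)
The plan is to combine the lower bound already exhibited in Figure~\ref{fig:friendalt} (left) with a short double-counting argument for the matching upper bound. For the lower bound there is nothing left to do: the equilibrium analysed in the text just above Theorem~\ref{thm:socialhomework} is a pure Nash equilibrium of the fully altruistic extension whose social welfare is exactly half of the optimum, so $\PoT(\Tn{1},\mathcal{F}_2)\ge 2$. Thus it suffices to prove that \emph{every} pure Nash equilibrium $\s$ of the fully altruistic extension $\mathcal{F}_2^\tau$ ($\tau\in\Tn{1}$) satisfies $C(\s)\ge \tfrac12\,C(\s^*)$, where $\s^*$ is a welfare-maximising profile. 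Recall from the remark preceding Theorem~\ref{thm:socialhomework} that, since all weights $u_{ij}\ge 0$, we may take $\s^*$ to place all players in a single clique, so that $C(\s^*)=\sum_{i\ne j}u_{ij}$.

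Fix such an equilibrium $\s$ and let $A,B$ be the two cliques it induces (one of which may be empty, in which case $\s=\s^*$ and we are done). For disjoint sets of players $S,T$ write $e(S)=\sum_{i,j\in S,\,i\ne j}u_{ij}$ and $e(S,T)=\sum_{i\in S,\,j\in T}(u_{ij}+u_{ji})$, so that $C(\s)=e(A)+e(B)$ while $C(\s^*)=e(A)+e(B)+e(A,B)$. Under full altruism a player's subjective utility is the social welfare $C$, so for $i\in A$ the condition that deviating to $B$ does not help reads
$$\sum_{j\in B}(u_{ij}+u_{ji})\ \le\ \sum_{j\in A\setminus\{i\}}(u_{ij}+u_{ji}),$$
since the move costs $i$ the edges to the rest of $A$ in both orientations and earns the edges to all of $B$.

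Summing this inequality over all $i\in A$ gives $e(A,B)\le 2\,e(A)$ (each unordered pair inside $A$ is counted twice on the right-hand side), and summing the symmetric inequalities over all $i\in B$ gives $e(A,B)\le 2\,e(B)$. Adding the two yields $e(A,B)\le e(A)+e(B)=C(\s)$, whence
$$C(\s^*)=C(\s)+e(A,B)\ \le\ 2\,C(\s),$$
and together with the lower bound this gives $\PoT(\Tn{1},\mathcal{F}_2)=2$. The argument is entirely elementary; the only points requiring care are keeping both orientations $u_{ij},u_{ji}$ of each edge in the accounting (we do not assume $u_{ij}=u_{ji}$) and the degenerate empty-clique case noted above — there is no real obstacle, which is precisely why this "altruistic" bound matches the selfish one rather than deteriorating, in contrast to the tribal case treated in Theorem~\ref{thm:pot2grouping}.
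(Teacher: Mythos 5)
Your proposal is correct and matches the paper's argument in substance: the lower bound is the four-player cycle of Fig.~\ref{fig:friendalt} (left), and the upper bound comes from summing the altruistic Nash condition $\sum_{j\in B}(u_{ij}+u_{ji})\le\sum_{j\in A\setminus\{i\}}(u_{ij}+u_{ji})$ over all players and double-counting the within-clique weight, exactly as in the paper's proof of Theorem~\ref{thm:friendshipk} specialised to $k=2$. Your bookkeeping via $e(A)$, $e(B)$, $e(A,B)$ is only a notational repackaging of the paper's sum over cliques, and your care with the two orientations $u_{ij},u_{ji}$ and the empty-clique case is consistent with what the paper does.
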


What happens once different tribes enter the picture?
In the above example, the player's own loss

 due to lost friends was neatly cancelled by gained benefit due to newly gained friends, and likewise the loss to the community was cancelled by the gain experienced by the members of the defector's new peer group.

With multiple tribes, we no longer have to make a putative defector value the gains and losses of all other players equally.

How much worse could we make the equilibrium by making the player care about the friends he is currently in a clique with, but wouldn't care about the benefit he could bring to the members of the other clique?

The defector would have to value the benefit \emph{to himself} of the other group's friendship higher than the sum of the benefit to himself of his current group's \emph{and} the benefit his current group derives from him.
Therefore, the foregone friendship of those who are in the other clique could be worth up to twice as much to the player before he is incentivised to switch:
if we mark the members of one tribe {\color{red} red} and the members of the other tribe {\color{blue} blue}, the choice of cliques shown in
Fig. \ref{fig:friendalt} on the right

is pure Nash. Here, each player would gain 2 by defecting, and their tribe would gain 0; at the same time, they would lose 1, and their community would also lose 1, and so defecting is zero-sum. This example turns out to be tight for $2$-grouping, regardless of how many tribes we allow the players to belong to.

\begin{theorem} \label{thm:pot2grouping} The Price of Tribalism for the social $2$-grouping game and arbitrary partition functions $\tau \in \T^{(*)}$ is 3.
\end{theorem}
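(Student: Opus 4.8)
The plan is to prove the matching bounds $\PoT(\Tstar,\mathcal{F}_2)\ge 3$ and $\PoT(\Tstar,\mathcal{F}_2)\le 3$ (all equilibria pure, as before). The lower bound is essentially the content of the discussion preceding the statement: in the instance of Fig.~\ref{fig:friendalt} (right), with the two-tribe partition $\tau\in\Tn{2}\subseteq\Tstar$ that marks $a,d$ red and $b,c$ blue, one checks that no player can raise their tribe's total utility by switching clique (e.g.\ $a$'s tribe earns $u_a+u_d=1+1=2$ both before and after $a$ would move to $B$, since the gain of $2$ to $a$ is offset by the loss of $1$ to $a$ and $1$ to $d$), so this profile is a pure Nash equilibrium of $G^\tau$; it has social welfare $4$, while the social optimum, attained by putting everyone in one clique, is $12$. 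Hence $\PoT(\Tstar,\mathcal{F}_2)\ge 3$, and it remains to prove the upper bound.

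For the upper bound, fix any social $2$-grouping game, any $\tau\in\Tstar$, and any pure Nash equilibrium $\s$ of $G^\tau$, and let $A,B$ be the two cliques it induces. Write $W_A=\sum_{i,j\in A,\ i\neq j}u_{ij}$ and $W_B=\sum_{i,j\in B,\ i\neq j}u_{ij}$ for the within-clique weights and $X=\sum_{i\in A,\ j\in B}(u_{ij}+u_{ji})$ for the total cross weight. Because all $u_{ij}\ge 0$, the single-clique profile maximises social welfare, so the optimum equals $W_A+W_B+X$ (all pairs being captured, and these decompose into within-$A$, within-$B$, and cross), while the welfare of $\s$ equals $W_A+W_B$; the theorem therefore reduces to the inequality $X\le 2(W_A+W_B)$.

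To establish that inequality, I would spell out the equilibrium condition for each player in terms of directed edge weights. For $i\in A$, moving $i$ to $B$ changes the total utility of $i$'s tribe $T_i=\tau^{-1}(\tau(i))$ by $\bigl(\sum_{j\in B}u_{ji}-\sum_{j\in A\setminus\{i\}}u_{ji}\bigr)+\bigl(\sum_{k\in T_i\cap B}u_{ik}-\sum_{k\in T_i\cap A,\ k\neq i}u_{ik}\bigr)$, where the first bracket is the change in $i$'s own utility and the second the change felt by $i$'s tribe-mates; since $\s$ is an equilibrium this is $\le 0$. Summing this inequality over all $i\in A$ and regrouping, the left-hand side becomes $\sum_{i\in A,\,j\in B}u_{ji}$ plus the non-negative quantity $\sum_{i\in A,\,k\in B,\,\tau(i)=\tau(k)}u_{ik}$, and the right-hand side becomes $W_A$ plus $\sum_{i,k\in A,\,i\neq k,\,\tau(i)=\tau(k)}u_{ik}\le W_A$; dropping the extra non-negative term on the left and bounding the right therefore gives $\sum_{i\in A,\,j\in B}u_{ji}\le 2W_A$. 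The symmetric sum over $i\in B$ gives $\sum_{i\in A,\,j\in B}u_{ij}\le 2W_B$, and adding the two yields $X\le 2(W_A+W_B)$, so the optimum $W_A+W_B+X$ is at most $3(W_A+W_B)$, i.e.\ at most three times the welfare of $\s$. (If $\s$ has welfare $0$ then $W_A=W_B=0$, forcing $X=0$ and hence optimum $0$, so the ratio is controlled there too.) Taking the supremum over games, partitions and equilibria gives $\PoT(\Tstar,\mathcal{F}_2)\le 3$, which with the lower bound completes the proof.

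The only delicate step is the regrouping inside the summation over $i\in A$: one must keep separate the sums over $i$'s in-edges $u_{ji}$ (which contribute the cross term $\sum_{i\in A,j\in B}u_{ji}$ on the left and the whole of $W_A$ on the right) and the sums over $i$'s out-edges $u_{ik}$ restricted to tribe-mates $k$ (which contribute only terms that are either non-negative and hence droppable on the left, or bounded by $W_A$ on the right). Once this accounting is carried out carefully the bound $X\le 2(W_A+W_B)$ falls out; the identification of the optimum, the remaining arithmetic, and the verification of the lower-bound example are all routine.
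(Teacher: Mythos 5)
Your proposal is correct and follows essentially the same route as the paper: the lower bound is the same four-player example, and your upper bound (summing the tribal Nash condition over the players of each clique, dropping the non-negative cross-clique tribe term, and bounding the within-clique tribe term by the full within-clique welfare to get $X\le 2(W_A+W_B)$) is just the paper's single summed inequality organised clique by clique. No gaps.
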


\begin{proof}
The above example shows a lower bound. For the upper bound, derive from the Nash condition summed over all players:
\begin{eqnarray*}
  \sum_i\sum_{j: \s(i)=\s(j)} u_{ji} + \sum_i \sum_{j: \substack{\s(i)=\s(j) \\ \tau(i)=\tau(j)}} u_{ij} &\geq& \sum_i \sum_{j: \s(i)\neq \s(j)} u_{ji} + \sum_i \sum_{j: \substack{\s(i)\neq\s(j) \\ \tau(i)=\tau(j)}} u_{ij} \\
 \sum_i\sum_{j: \s(i)=\s(j)} u_{ji} + \sum_i \sum_{j: \substack{\s(i)=\s(j) \\ \tau(i)=\tau(j)}} u_{ij} &\geq& \sum_i \sum_{j: \s(i)\neq \s(j)} u_{ji} \\
 2 \sum_i\sum_{j: \s(i)=\s(j)} u_{ji} &\overset{(*)}{\geq}& \sum_i \sum_{j: \s(i)\neq \s(j)} u_{ji}\\
 3 \sum_i\sum_{j: \s(i)=\s(j)} u_{ji} &\geq& \sum_i \sum_{j: \s(i)\neq \s(j)} u_{ji} + \sum_i \sum_{j: \s(i)=\s(j)} u_{ji} \\
 &=& \sum_i\sum_j u_{ji} = U(\s^*),
\end{eqnarray*}
where $(*)$ is as all summands are non-negative and relaxing the restriction only adds more terms: $$\sum_i\sum_{j: \s(i)=\s(j)} u_{ij}  \geq \sum_i \sum_{j: \substack{\s(i)=\s(j) \\ \tau(i)=\tau(j)}} u_{ij}.$$
As the sum on the LHS is just the social welfare at Nash, this completes the proof.
\end{proof}

The above argument can be extended to show that $k$ tribes are in fact always strictly worse than $k-1$ whenever there are at least $k$ distinct friendship cliques to be formed; see Theorem \ref{thm:potkgrouping}.

To the extent to which the social grouping game is a useful model of real-world social networks, we see this result as confirmation of an intuitively relatable phenomenon: when individuals ``fall in with the wrong crowd'', they can get stuck in local minima that are quite bad for everyone.

\subsection{Network contribution games}
\label{sec:contribution-games}

A more involved model of social relationships are the \emph{network contribution games} $\mathcal{N}$ described by Anshelevich and Hoefer \cite{contribution-games}. We model a social graph in which the vertices represent players who can divide up a personal \emph{budget} of effort $B_i \geq 0$ among their potential relationships, represented by edges. The benefit each player derives from a relationship $e=\{i,j\}$ is given by a non-negative, non-decreasing and symmetric \emph{reward function} $f_e: \R^2_{\geq 0} \to \R_{\geq 0}$ in terms of the amount of effort each of them invests, and each player's total utility is just the sum of benefits from all relationships.

In the original paper, players are both allowed to deviate individually if it benefits themselves and to coordinate a joint deviation with a neighbouring player if it benefits them both; this is both seen as more realistic for pairwise relationships, and necessary to enable interesting defection patterns for some classes of payoff functions to exist at all. We will follow this approach in the tribal extension, calling deviations by single players and connected pairs \emph{unilateral} and \emph{bilateral} respectively.
We will denote an equilibrium stable against bilateral deviations as a \emph{pairwise equilibrium}, giving rise to a \emph{pairwise Price of Tribalism}. Furthermore, we note that all lower-bound examples in this section work even when entire tribes are allowed to coordinate their deviation; we will call the corresponding PoT \emph{coordinated}.

\begin{definition} Some notation for the rest of this section.
\begin{enumerate}[label=(\roman*)]
\item Denote by $\s_i (e)$ the amount that player $i$ contributes into edge $e$ in strategy $\s$.
\item For each edge $e = \{i,j\}$, let $w_e (\s) = f_e(\s_i(e), \s_j (e))$ be the reward that the edge pays to both $i$ and $j$.
\item An edge $e = \{i,j\}$ is called \emph{tight} in strategy $\s$ if $\s_i(e) \in \{0,B_i\}$ and  $\s_j(e) \in \{0,B_j\}$, and a strategy $\s$ is tight if all edges are tight.
\end{enumerate}
\end{definition}

When the reward is
just
 a weighted sum of investments, \cite{contribution-games} shows that the PoA is 1. It is easy to establish that this is also the case when players are altruistic.

\begin{corollary}[of Theorem 2.8 in \cite{contribution-games}]
   $\PoT(\Tn{1},\mathcal{N}_{+})=1$.
  \label{cor:NSGsum1}
\end{corollary}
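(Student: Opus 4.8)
The plan is to observe that for an $\mathcal{N}_{+}$ game $G$ and the constant partition $\tau\in\Tn{1}$, forming the altruistic extension $G^\tau$ does not change any player's best‑response correspondence, so $G$ and $G^\tau$ have exactly the same pure Nash equilibria and the same pairwise equilibria; the statement then reduces to Theorem~2.8 of \cite{contribution-games}, which says those equilibria are socially optimal, together with the trivial fact that $\PoT$ is always at least $1$. The reason the best responses are unchanged is that in an $\mathcal{N}_{+}$ game the social welfare splits into a sum of per‑player terms: since $f_e(x,y)=c_e(x+y)$ with $c_e\ge 0$ and every edge $e=\{i,j\}$ pays reward to both endpoints, $C_G(\s)=\sum_i u_i(\s)=2\sum_{e=\{i,j\}}c_e\bigl(\s_i(e)+\s_j(e)\bigr)=\sum_i W_i(\s_i)$, where $W_i(\s_i):=2\sum_{e\ni i}c_e\,\s_i(e)$ depends only on player $i$'s own contribution vector.

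First I would write out the best‑response comparison explicitly. In $G^\tau$ player $i$'s objective is $u_i^\tau(\s)=C_G(\s)$, and once $\s_{-i}$ is fixed the only summand of $\sum_k W_k(\s_k)$ that varies with $\s_i$ is $W_i(\s_i)$; in the original game $u_i(\s)=\sum_{e\ni i}c_e\bigl(\s_i(e)+\s_j(e)\bigr)$, whose $\s_i$‑dependent part is $\tfrac12 W_i(\s_i)$. Hence, for fixed $\s_{-i}$, $u_i^\tau(\cdot\,;\s_{-i})$ is a positive affine transform of $u_i(\cdot\,;\s_{-i})$, so the two are maximised at exactly the same strategies. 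It follows that $\s$ is a pure Nash equilibrium of $G^\tau$ if and only if it is one of $G$; and since a pairwise equilibrium is in particular resilient to unilateral deviations, every pairwise equilibrium of $G^\tau$ is a pure Nash equilibrium of $G^\tau$, hence of $G$. Applying Theorem~2.8 of \cite{contribution-games} to all of these equilibria shows that each attains $\max_{\s'\in\Sigma}C_G(\s')$; hence the ratio in Definition~\ref{defn:pot} is $1$ for every $G\in\mathcal{N}_{+}$ and $\tau\in\Tn{1}$, and with the trivial lower bound $\PoT\ge 1$ we conclude $\PoT(\Tn{1},\mathcal{N}_{+})=1$.

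I expect essentially no obstacle here — separability trivialises the problem — and the only step that needs a little care is the bookkeeping in the identity $C_G(\s)=\sum_i W_i(\s_i)$, in particular recognising that from player $i$'s vantage point the ``cross'' contributions $c_e\,\s_j(e)$ on incident edges $e=\{i,j\}$ are additive constants. If one prefers to avoid citing Theorem~2.8, the same separability gives a direct argument: at a pure Nash equilibrium $\s$ of $G^\tau$ each player maximises their own term, so $C_G(\s)=\sum_i\max_{t\in\Sigma_i}W_i(t)\ge C_G(\s^*)$ for any socially optimal $\s^*$, forcing $C_G(\s)=C_G(\s^*)$; and existence of equilibria is free, since each $W_i$ is independent of the other players, so everyone simultaneously playing any maximiser of their $W_i$ is a pure Nash equilibrium of $G^\tau$.
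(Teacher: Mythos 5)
Your proposal is correct and rests on exactly the same observation as the paper's proof: the social welfare of an $\mathcal{N}_{+}$ game separates as $\sum_i W_i(\s_i)$ with $W_i(\s_i)=2\sum_{e\ni i}c_e\,\s_i(e)$ depending only on player $i$'s own allocation, so each player's altruistic best response is to put all budget on the adjacent edge of maximal $c_e$, which is simultaneously the social optimum. Your fallback ``direct argument'' in the last paragraph is essentially verbatim the paper's proof, and the extra route via best-response equivalence with the selfish game plus Theorem~2.8 is a harmless elaboration.
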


\begin{proof}
The socially optimal strategies are just those where each player puts all of their budget onto the edge adjacent to them with the largest reward. For each player, this is also the locally best strategy for maximising social welfare, since their contribution to social welfare is just two times the portion of the reward that is due to their own budget.
\end{proof}

However:

\begin{theorem}
\label{lem:NCGsum2}
  Suppose all reward functions are of the form %$f_e (x,y) =$
 $c_e(x+y)$, where $c_e > 0$. Then the pure, pairwise and coordinated PoT each is $\PoT(\Tstar,\mathcal{N}_{+}) = 2$.
\end{theorem}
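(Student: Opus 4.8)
The plan is to prove a matching lower bound of $2$ by exhibiting a single $\mathcal N_+$ game with a tribal partition that has a bad equilibrium which is simultaneously pure, pairwise and coordinated, and to prove the upper bound $\PoT(\Tstar,\mathcal N_+)\le 2$ only for \emph{pure} equilibria. This suffices: every pairwise (resp.\ coordinated) equilibrium is in particular a pure equilibrium, and $\PoT$ is a supremum over the set of equilibria, so the pairwise and coordinated values lie between the lower-bound value and the pure upper bound, forcing all three to equal $2$.

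\emph{Lower bound.} Take three players, with $1$ and $2$ in one tribe and $3$ in another; budgets $B_1=B_2=1$, $B_3=0$; and edges $\{1,2\}$ with coefficient $c_{\{1,2\}}=1$ and $\{1,3\},\{2,3\}$ with coefficient $2-\epsilon$. In the social optimum each of $1,2$ puts its whole budget on its weight-$(2-\epsilon)$ edge, for welfare $2\cdot 2\cdot(2-\epsilon)$, while the profile $\s$ in which $1$ and $2$ both put everything on $\{1,2\}$ has welfare $2\cdot 2$, so the ratio tends to $2$. I claim $\s$ is a pure, pairwise and coordinated equilibrium. A unit of a tribe-$\{1,2\}$ player's budget is worth $2c_{\{1,2\}}=2$ to its tribe if placed on the in-tribe edge $\{1,2\}$ but only $2-\epsilon$ on a cross-tribe edge, so no unilateral deviation off $\{1,2\}$ helps that player's tribe, giving a pure equilibrium. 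Since $B_3=0$, player $3$ contributes to no edge, so every bilateral deviation involving $3$ reduces to a unilateral one and is equally unprofitable; the only other connected pair, which is also the only nontrivial tribe, is $\{1,2\}$, and if $1$ leaves $a$ and $2$ leaves $b$ of its budget on $\{1,2\}$ then the tribe's total utility equals $\epsilon(a+b)+2(2-\epsilon)$, uniquely maximised at $a=b=1$, i.e.\ at $\s$. Letting $\epsilon\to 0$ gives $\PoT\ge 2$ for all three notions (and at $\epsilon=0$ the ratio is exactly $2$, with $\s$ a weak equilibrium).

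\emph{Upper bound.} Fix a $\tau$-tribal extension $G^\tau$ of a game in $\mathcal N_+$, a pure equilibrium $\s$, and a social optimum $\s^*$; recall that $U(\s)=\sum_i u_i(\s)=2\sum_e w_e(\s)$. Mirroring the proof of Theorem~\ref{thm:pot2grouping}, for each player $i$ consider the deviation to its optimal allocation $\s^*_i$. Switching $i$ to $\s^*_i$ changes $w_e$ only on edges incident to $i$: it changes $u_i$ by $\sum_{e\ni i}c_e(\s^*_i(e)-\s_i(e))$ and, for each tribemate $j\neq i$ of $i$ joined to $i$ by an edge, changes $u_j$ by $c_{\{i,j\}}\bigl(\s^*_i(\{i,j\})-\s_i(\{i,j\})\bigr)$. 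Writing $N_i(x)=\sum_{e\ni i}c_e x_i(e)$ and $M_i(x)=\sum_{e=\{i,j\}:\,\tau(j)=\tau(i)}c_e x_i(e)$, the Nash condition $\sum_{j:\tau(j)=\tau(i)}u_j(\s^*_i;\s_{-i})\le\sum_{j:\tau(j)=\tau(i)}u_j(\s)$ rearranges to
$$ N_i(\s)+M_i(\s)\ \ge\ N_i(\s^*)+M_i(\s^*). $$
Summing over $i$ and using the identities $\sum_i N_i(x)=\sum_e w_e(x)=\tfrac12 U(x)$ and $0\le\sum_i M_i(x)=\sum_{e\ \mathrm{in\text-tribe}}w_e(x)\le\tfrac12 U(x)$ yields
$$ U(\s)\ =\ \tfrac12 U(\s)+\tfrac12 U(\s)\ \ge\ \tfrac12 U(\s)+\sum_i M_i(\s)\ \ge\ \tfrac12 U(\s^*)+\sum_i M_i(\s^*)\ \ge\ \tfrac12 U(\s^*), $$
i.e.\ $U(\s^*)/U(\s)\le 2$, which is the desired upper bound.

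I expect the main difficulty to lie in the lower-bound construction rather than the upper bound. The obvious symmetric attempt --- two tribes $\{1,2\},\{3,4\}$ with an in-tribe edge of weight $1$ and cross edges $\{1,3\},\{2,4\}$ of weight $\approx 2$ --- fails, because the connected cross-tribe pair $\{1,3\}$ can jointly shift budget onto their shared weight-$2$ edge so that both tribes strictly gain, so that profile is not a pairwise equilibrium; making the attractive cross-tribe edges terminate at a budgetless player is precisely what neutralises every bilateral (hence coordinated) deviation while keeping $\s$ a genuine equilibrium. The upper bound is essentially bookkeeping once one notices that a tribal player's deviation affects its tribemates only through the edges it shares with them, so the correction terms $M_i$ aggregate exactly to the total reward carried by in-tribe edges, which is at most half the social welfare.
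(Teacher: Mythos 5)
Your proof is correct and follows essentially the same route as the paper's: the upper bound is the same argument (each player unilaterally deviates to its optimal allocation, the tribe double-counts rewards on in-tribe edges at Nash but is only credited single rewards after the deviation, and summing over players gives the factor $2$), and your lower bound, though a slightly different graph, exploits the identical mechanism of placing the high-reward edge against a budgetless out-of-tribe player so the deviating tribe never internalises the gain. The only cosmetic difference is that your $\epsilon$-perturbation makes the bad profile a strict equilibrium, whereas the paper's three-player path achieves the ratio $2$ exactly with a weak one.
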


\begin{proof}
As noted in the proof of Theorem 2.8 in \cite{contribution-games}, the social optimum $\s^*$ is attained when all players invest their entire budget in the respective adjacent edge $e^*$ with maximum $c_e$. For a configuration $\s$ to be a Nash equilibrium, no player may want to deviate to investing their budget like this. If a player $i$ invests $\s_i(e)$ units into an adjacent edge $e\sim i$, their tribe earns $2\s_i(e)c_e$ units of utility if the player on the other end is also in the same tribe, and $\s_i(e)c_e$ units otherwise.
So $\sum_{e\sim i} 2 \s_i(e)c_e \geq B_i c_{e^*}$. Summing over all players, we find that $$U(\s) = \sum_i \sum_{e\sim i} \s_i(e)c_e \geq \frac{1}{2} \sum_i B_i c_{e^*(i)} = U(\s^*),$$ and so the PoT is bounded above by 2. \qedhere

\end{proof}

\begin{proof}[Proof (lower bound)]
A tight lower bound is given by the following graph:
\begin{center}
\begin{tikzpicture}[every node/.append style={font=\scriptsize}]
\tikzstyle{s} = [circle, minimum width=5pt, fill, inner sep=0pt]
    \node[s,color=red] (x) at (0,0) {} node[below=0.1 of x] {$0$};
    \node[s,color=blue] (y) at (2,0) {} node[below=0.1 of y] {$1$};
    \node[s,color=blue] (z) at (4,0) {} node[below=0.1 of z] {$0$};

    \draw (x) -- node[above] {$2(x+y)$} (y) -- node[above] {$(x+y)$} (z);
\end{tikzpicture}
\end{center}
Here, the two players on the right are in the same tribe, but only the middle player has any budget. It would be socially optimal for them to invest this in the edge on the left, attaining a social welfare of $4$; however, the configuration where they instead invest in the edge on the right -- yielding a social welfare of $2$ -- is stable, since the blue tribe's total utility is $2$ regardless of how the middle player's budget is allocated. \qedhere
\end{proof}

When all reward functions are convex in each coordinate, \cite{contribution-games} shows a PoA of 2. In Theorem \ref{thm:NCGaltconvex}, we show
that this is also the case given altruism. Again, though, tribalism leads to deterioration.

\begin{theorem}
\label{lem:NCGconv4}
  Suppose all reward functions are convex in each coordinate. Then the pure, pairwise and coordinated Price of Tribalism each is $\PoT(\Tstar,\mathcal{N}_{C}) = 4$.
\end{theorem}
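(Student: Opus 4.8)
The plan is to prove the two inequalities separately, in each case sharpening the reasoning behind the altruistic bound of $2$ (Theorem~\ref{thm:NCGaltconvex}) to account for the fact that distinct tribes cannot be made to value each other's welfare.

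For the upper bound $\PoT(\Tstar,\mathcal N_C)\le 4$, I would first note that the social optimum may be taken \emph{tight}: since each $f_e$ is convex in each coordinate and vanishes when either argument is $0$, fixing all contributions except player $i$'s makes $C_G$ a convex function of how $i$ splits its budget $B_i$ over its edges, hence maximised at a vertex of that simplex; so there is an optimum in which every player invests its entire budget on a single edge, and the edges then carrying positive reward form a matching $M^*$ with $C_G(\s^*)=2\sum_{e=\{i,j\}\in M^*}f_e(B_i,B_j)$. Now fix an equilibrium $\s$ of $G^\tau$; for each $e=\{i,j\}\in M^*$ examine the bilateral deviation that moves both $i$'s and $j$'s whole budgets onto $e$. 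Afterwards $i$ and $j$ each get exactly $f_e(B_i,B_j)$ (every other edge of $i$ or $j$ pays $0$ since $f(\cdot,0)=0$), and any third player $k$ only loses the reward it was drawing from its edges to $i$ and to $j$. Stability says one of $\tau(i),\tau(j)$ does not strictly gain; expanding that inequality and bounding $\sum_{k\in\tau(i)\setminus\{i\}}w_{\{k,i\}}(\s)\le u_i(\s)$ and $\sum_{k\in\tau(i)\setminus\{i\}}w_{\{k,j\}}(\s)\le u_j(\s)$ yields $f_e(B_i,B_j)\le 2u_i(\s)+u_j(\s)$, or symmetrically, so in either case $f_e(B_i,B_j)\le 2\bigl(u_i(\s)+u_j(\s)\bigr)$; summing over the matching $M^*$ gives $C_G(\s^*)\le 4\,C_G(\s)$. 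This uses only bilateral deviations, so it bounds the pure, pairwise and coordinated $\PoT$ simultaneously.

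For the lower bound I would build a family of instances whose tribal equilibrium has social welfare a quarter (or $1-o(1)$ of a quarter) of the optimum and is stable even against whole-tribe deviations. The idea is to superpose the two separate factor-$2$ phenomena already in play: the one from Theorem~\ref{lem:NCGsum2}, in which a player stays on a suboptimal edge because the tribe-mate sharing it would lose their common reward, and the one behind the bound of $2$ for $\mathcal N_C$ in \cite{contribution-games}, in which the optimum additionally brings a second, equilibrium-idle population of players onto live edges. Concretely I would arrange the players as a cycle of tribes, each tribe internally ``parked'' on a low-reward edge at equilibrium, with the high-reward edges of the optimum running between consecutive tribes and simultaneously re-awakening the idle layer, and with the reward functions (convex, zero on the axes) and budgets chosen so that the optimal profile is reachable only by a simultaneous rotation of all the tribes at once -- any unilateral, bilateral, or single-tribe move strands some player on an edge whose other endpoint is absent, which pays $0$.

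The main obstacle is this lower-bound construction. One has to pick the functions and budgets so that each tribe's equilibrium payoff is locally optimal against every one of its own deviations -- in particular the bilateral deviation onto any incident optimum edge must be at worst zero-sum for each tribe that touches it, which is precisely what pins the gain of each ``hop'' at exactly $2$ rather than something smaller -- while still engineering a second, independent doubling through the re-awakened players, and while keeping the configuration a genuine rather than a merely weak equilibrium. Checking stability against coordinated deviations and verifying that the resulting ratio meets the upper bound exactly is the delicate part; recomputing $C_G(\s)$ and $C_G(\s^*)$ once the instance is fixed is routine.
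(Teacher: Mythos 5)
Your upper bound is, in substance, the paper's own argument: pass to a tight optimum, observe that the positive-reward edges of the optimum form a matching of ``witnesses'', and for each such edge $e=\{i,j\}$ apply pairwise stability to the bilateral deviation in which both endpoints move their whole budgets onto $e$, charging the tribe-mates' lost rewards against $u_i(\s)$ and $u_j(\s)$ to obtain $w_e(\s^*)\le 2(u_i(\s)+u_j(\s))$ and hence $U(\s^*)\le 4U(\s)$. That half is correct (with the same caveat the paper itself glosses over: the argument genuinely uses stability against a \emph{cross-tribe} bilateral deviation, so it bounds the pairwise notion; it does not by itself cover profiles that are only unilaterally or only tribe-wise stable).

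The genuine gap is the lower bound. You describe a target construction (``superpose the two factor-$2$ phenomena'', ``a cycle of tribes with an idle layer'') and you correctly identify the constraint that pins each hop at a factor of $2$, but you never exhibit an instance, never fix the reward functions or budgets, and you explicitly defer the stability verification --- which is exactly the step on which the matching bound hinges. As it stands you have shown $\PoT(\Tstar,\mathcal N_C)\le 4$ but not the claimed equality. For comparison, the paper's witness is a six-player path $r_1\,r_2\,b_1\,b_2\,r_3\,r_4$ (colours are tribes), all budgets $1$, with rewards $\varepsilon f$, $f$, $(\tfrac12+\varepsilon)f$, $f$, $\varepsilon f$ along the path, for any coordinate-convex $f$ with $f(x,0)=0$. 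The equilibrium puts all budget on the first, third and fifth edges, for welfare $(1+6\varepsilon)f(1,1)$ against an optimum of $4f(1,1)$. Same-tribe (hence also unilateral and whole-tribe) deviations only strand budget on edges whose other endpoint contributes nothing; the one dangerous move is a cross-tribe pair, say $r_2$ and $b_1$, jointly shifting $a$ and $b$ units onto their shared $f$-edge, and this is blocked on the blue side because
$$ f(a,b)+2(\tfrac12+\varepsilon)f(1,1-b)\;\le\;2(\tfrac12+\varepsilon)\left(f(1,b)+f(1,1-b)\right)\;\le\;2(\tfrac12+\varepsilon)f(1,1), $$
using monotonicity together with the fact that a coordinate-convex $f$ with $f(\cdot,0)=0$ is superadditive in each coordinate. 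Note this is a path rather than a cycle, and the ``$1/2+\varepsilon$'' coefficient on the middle edge is precisely what makes the blocking inequality tight enough to realise the full factor of $4$. Your cycle-of-tribes picture may well be realisable, but until the functions are written down and the deviations checked, it is a plan rather than a proof.
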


\begin{proof}[Proof (upper bound)]
  By Claim 2.10 in \cite{contribution-games}, since all reward functions are coordinate convex, we can assume that the optimum $\s^*$ is tight. Fix a pairwise tribal Nash equilibrium $\s$. Note that we can normalise the $f_e$'s so that $f_e(0,0)=0$. Since the reward functions are non-decreasing, the normalised functions will still be valid reward functions, and subtracting a constant from utility at both OPT and Nash can only increase their ratio.

In a tight strategy, each player can invest their budget in at most one edge. When a player $i$ invests in an edge $e$ in the optimum solution $\s^*$, we will say that $i$ is a \emph{witness} to $e$.
 Let $e = \{i,j\}$ be an edge where $\s^*_i(e) = B_i$ and $\s^*_j (e) = B_j$, so $i$ and $j$ are both witnesses to $e$. By the Nash condition, if $i$ and $j$ were to bilaterally deviate to their strategies in $\s^*$, then it must not be beneficial for at least one of the two players' tribes. Suppose WLOG that this is $i$.
  In the worst case, $i$ and $j$ were in the same tribe \emph{and} benefitting other members of their tribe, and so the tribe loses $2(u_i(\s)+u_j(\s))$. On the other hand, the worst-case gain occurs when $i$ and $j$ are in different tribes, and so the switch only benefits $i$'s tribe

 one lot of $w_e(\s^*)$. So by the Nash condition, we can derive
  \begin{align*}
    u_i^\tau (\s) \geq u_i^\tau (\s_i^*; \s_j^*; \s_{-i,j}) \geq u_i^\tau (\s) - 2 (u_i(\s) + u_j(\s)) + w_e (\s^*).
  \end{align*}
Rearranging the inequality, we have $2 (u_i(\s) + u_j(\s)) \geq w_e(\s^*)$. So $w_e(\s^*)$ is less than two times the sum of the utilities of its witnesses in $\s$.
So suppose instead $e = \{i,j\}$ is an edge where $\s^*_i(e) = B_i$ and $\s^*_j (e) = 0$, so only $i$ is a witness to $e$.
By the same reasoning as above, we have
  \begin{align*}
    u_i^\tau (\s) \geq u_i^\tau (\s_i^*; \s_{-i}) \geq u_i^\tau(\s) - 2 u_i (\s) + w_e (\s^*).
  \end{align*}
  So again, $w_e(\s^*)$ is less than two times the sum of the utilities of its witnesses in $\s$.

Since each player is marked as a witness to exactly one edge, we can sum the above inequalities, treating one side as a sum over all edges and the other as a sum over all players. We thus conclude
  \begin{align*}
    U(\s^*) = 2 \sum_e w_e(\s^*) \leq 4 \sum_{i \in V} u_i(\s) = 4 U(\s). \tag*{\qedhere}
  \end{align*}
\end{proof}

\begin{proof}[Proof (lower bound)]
The following example in fact provides a matching lower bound for any function class that contains a coordinate convex function $f$ satisfying $f(x,0)=0$ and is closed under scalar multiplication.
\begin{center}
\begin{tikzpicture}[every node/.append style={font=\scriptsize}]
\tikzstyle{s} = [circle, minimum width=5pt, fill, inner sep=0pt]
    \node[s,color=red] (n1) at (0,0) {};
    \node[s,color=red] (n2) at (2,0) {};
    \node[s,color=blue] (n3) at (4,0) {};
    \node[s,color=blue] (n4) at (6,0) {};
    \node[s,color=red] (n5) at (8,0) {};
    \node[s,color=red] (n6) at (10,0) {};
    \foreach \n in {1,...,6}{ \node[below=0.1 of n\n] {$1$}; }

    \draw (n1) -- node[above] {$\varepsilon f$} (n2) -- node[above]{$f$} (n3);
    \draw (n3) -- node[above] {$(\frac12+\varepsilon)f$} (n4);
    \draw (n6) -- node[above] {$\varepsilon f$} (n5) -- node[above]{$f$} (n4);
\end{tikzpicture}
\end{center}
The social optimum, with welfare $4f(1,1)$, is attained when the four players in the middle invest their budgets in the respective adjacent edge with payoff $f$. However, we can show that the configuration in which all budget is invested in the first, third and fifth edge, for a total payoff of $(2\varepsilon+1+2\varepsilon+2\varepsilon)f(1,1)$, is stable against unilateral, bilateral and whole-tribe deviations:
No set of players who are in the same tribe will want to deviate, as
this would involve diverting budget from an edge that has investments on both ends (thus losing utility) to one that has no investment on the other end (thus not gaining any).
Also, the two (distinct-tribe) players at the second and fourth edge in the graph will not want to deviate together, because this will not benefit
the blue tribe player closer to the center: supposing they divert $b$ units and the red player diverts $a$ units to their shared edge, we have
$$ f(a,b) + 2(1/2+\varepsilon) f(1,1-b) < 2(1/2+\varepsilon) (f(1,b)+f(1,1-b)) < 2(1/2+\varepsilon) f(1,1) $$
by non-decreasingness and coordinate convexity.
\end{proof}

\subsection{Atomic linear routing games}
\label{sect:atomiclinear}

Atomic linear routing games were first defined in \cite{rose-1973}, and their prices of anarchy were first studied in \cite{og-atomicroutingPoA} in the context of asymmetric scheduling games; an exposition of this is given in \cite{r-20lectures}. In these games, each player $i$ is associated with a pair of vertices $(s_i, t_i)$ of a directed graph, called its \emph{source} and \emph{sink} respectively. We think of the game as modelling multiple players traversing a road network, incurring some delays along the way depending on the total congestion on each road segment traversed. In the linear case, these delay functions are assumed to be linear, so each edge $e$ is associated with a positive factor $\alpha_e$ such that when $k$ players are on the edge \ycomment{$k$ is normally used for the number of tribes}, each of them incurs a delay of $\alpha_e k$ (and hence the sum of their delays is $\alpha_e k^2$). Formally, the strategies available to player $i$ are the set of paths from $s_i$ to $t_i$ in the graph, and the cost incurred by the player is $c_i(\s)=\sum_{e\in \s_i} \alpha_e \#\{j:e\in\s_j\}.$

By \cite{ChristodoulouK05,a-atomic-routing}, the pure Price of Anarchy for atomic linear routing games is exactly $\frac{5}{2}$.
In \cite{cara}, a weaker upper bound of $3$ is shown to hold if players are at least partially altruistic (optimising some convex combination of their own utility and social welfare), and this bound is tight when players are fully altruistic. We will demonstrate that this bound does not hold when players show tribal altruism towards  two or more tribes.
The example that gives rise to our lower bound relies on tribal behaviour that is quite intuitive: at certain interior nodes (case \ref{enum:case1} below), a tribally altruistic player prefers to continue paying a greater cost (while also causing a great cost to an ``outgroup'' member) over switching to a configuration that would benefit both the player and the commons, but result in a greater cost being paid by the player's tribe.

The matching upper bound uses smoothness.

\begin{theorem} \label{thm:alrg3k} The Price of Tribalism for atomic linear routing games $\mathcal{R}$ with 2-tribe partition functions $\tau \in \Tn{2}$, as well as arbitrary partition functions in $\Tstar$, is

$$ \PoT(\Tn{2},\mathcal{R}) = \PoT(\Tn{*},\mathcal{R}) = 4. $$
\end{theorem}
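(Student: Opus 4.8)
Since $\Tn{2}_G\subseteq\Tstar_G$ for every game $G$, we have $\PoT(\Tn{2},\mathcal R)\le\PoT(\Tstar,\mathcal R)$ for free, so it suffices to prove (i) a lower bound of $4$ witnessed by instances equipped with a two-tribe partition and (ii) an upper bound of $4$ valid for arbitrary partitions; together these pin both quantities to $4$.

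For the lower bound I would start from the classical tight gadgets that give the selfish bound $\tfrac52$ and the fully altruistic bound $3$ for atomic linear routing, and graft onto them the interior-node behaviour described just before the theorem. Along a chain of interior nodes, at each node I place a red-tribe player who currently traverses a congested segment shared with a blue out-group player; rerouting her onto a parallel ``bypass'' segment would strictly lower both her own cost and the social cost, but this bypass is used further down the chain by other red players, so the reroute would raise the \emph{red tribe's} aggregate cost, and she therefore stays put. Stacking such obstructions and choosing the edge coefficients $\alpha_e$ in a geometric-type pattern, I would arrange the resulting profile to be a pure Nash equilibrium of the tribal extension whose social cost is $4\cdot\opt$ (in the limit along the chain, or exactly for a suitable finite instance). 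The only real work is bookkeeping: verifying, node by node, that no player of either tribe has a profitable unilateral deviation.

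For the upper bound I would run a smoothness-style argument. Fix a pure Nash equilibrium $\s$ of $G^\tau$ and an optimum $\s^*$; for an edge $e$ write $x_e,x_e^*$ for its loads under $\s,\s^*$ and $x_{e,T},x_{e,T}^*$ for the parts of those loads contributed by tribe $T$, so that $c_i^\tau(\s)=\sum_e\alpha_e\,x_e\,x_{e,\tau(i)}$. Expanding $c_i^\tau(s_i^*;\s_{-i})$ the same way (with the loads $e$ carries after $i$ switches to $s_i^*$) and summing the Nash inequality $c_i^\tau(\s)\le c_i^\tau(s_i^*;\s_{-i})$ over all players $i$, the terms that are proportional to tribe sizes (those arising from $\sum_{j:\,e\in s_j}|\tau^{-1}(\tau(j))|$) occur identically on both sides and cancel; what remains is a single inequality $\sum_e\alpha_e\Phi_e\le\sum_e\alpha_e\Psi_e$ in which $\Phi_e,\Psi_e$ depend only on $x_e,x_e^*$, the tribe loads, and the numbers of players who leave or join $e$ between $\s$ and $\s^*$. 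The convexity bounds $\sum_T x_{e,T}^2\le x_e^2$, $\sum_T(x_{e,T}^*)^2\le(x_e^*)^2$ and $\sum_T x_{e,T}x_{e,T}^*\le x_e x_e^*$ --- whose upshot is that the extremal case has each tribe sending at most one player across each congested edge --- then collapse this to a per-edge inequality between non-negative integers, which, combined with the elementary fact that $a+b\le(a-b)^2+3b^2$ for non-negative integers $a,b$, yields $C_G(\s)\le4\,C_G(\s^*)$. (One could alternatively try to specialise the social-context smoothness frameworks cited above to a social graph that is a disjoint union of cliques.)

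The delicate step is the end of the upper bound: one must confirm that the surviving per-edge inequality really forces the constant $4$ and not something larger. This hinges on controlling how the tribe-dependent quadratic terms interact with the linear ``number of defectors'' terms --- in particular, on ruling out that a single very large tribe pushes the ratio past $4$, which is exactly where the three convexity inequalities are indispensable, since without them the naive estimate diverges. The lower bound is by comparison routine once the interior-node gadget is chosen; the only hazard there is combinatorial slippage in checking all deviation inequalities simultaneously.
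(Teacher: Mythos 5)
Your overall architecture coincides with the paper's --- a concrete two-tribe gadget for the lower bound, a smoothness computation for the upper bound, and the observation that $\Tn{2}_G\subseteq\Tstar_G$ lets one bound be proved for two tribes and the other for arbitrary partitions --- but neither half is actually carried out, and in both cases what you have deferred is where the entire content of the theorem lives. For the lower bound you never exhibit an instance: you restate the qualitative mechanism (a player declines a self- and socially beneficial reroute because it would raise her tribe's downstream cost) and promise that ``geometric-type'' coefficients along ``a chain'' can be tuned to make the bad profile a Nash equilibrium of cost $4\cdot\opt$. The paper's witness is a load-balancing game on a depth-$k$ binary tree: servers are nodes, players are edges, each pair of sibling edges is split between the two tribes, a node at depth $i<k$ has cost $(1/2)^i x$, and the leaf layer has cost $(1/2)^{k-1}\cdot 2\cdot x$; the profile in which every edge sits on its upper endpoint is Nash because each interior player is exactly indifferent ($3c$ before versus $3c$ after deviating, counting herself and her one same-tribe edge below) and the doubled leaf costs make the last layer indifferent as well, giving cost $4k$ against an optimum of at most $k+3$. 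The branching factor $2$ cancelling the decay $1/2$ is what makes every layer contribute a constant $4$ to the Nash cost and a constant $1$ to the optimum, and the ``one same-tribe player above, one below'' local picture is what the indifference calculation rests on; a chain does not reproduce either feature. Without a concrete instance and the layer-by-layer verification, the lower bound is a plan, not a proof.

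The upper bound has the same problem in a subtler form. You correctly write $c_i^\tau(\s)=\sum_e\alpha_e\,n_e(\s)\,n_e^{\tau(i)}(\s)$ and sum the Nash inequalities, but you never derive the surviving per-edge expression, and you yourself flag its analysis as ``the delicate step.'' That step is genuinely delicate: in the standard expansion the tribe-load square enters as $-\sum_t\bigl(n_e^t(\s)\bigr)^2$ (from the players leaving edge $e$), so what is needed there is a \emph{lower} bound on $\sum_t\bigl(n_e^t(\s)\bigr)^2$, whereas the convexity inequality you cite, $\sum_t\bigl(n_e^t(\s)\bigr)^2\le n_e(\s)^2$, is an upper bound and points the wrong way; the assertion that the extremal case has each tribe contributing at most one player per edge is exactly the claim that needs proving, not a consequence of those three inequalities. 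The paper instead establishes $(8/3,1/3,\tau)$-smoothness (Definition \ref{defn:tribalsmoothness}, Lemma \ref{lem:alrgsmooth}) by collapsing tribe loads to total loads via $n_e^t(\s)\le n_e(\s)$ and invoking the integer inequality $x(y-x)+xy+x+y\le\frac83 y^2+\frac13 x^2$ of Lemma \ref{lem:quad}, whence $\PoT\le(8/3)/(1-1/3)=4$ by Theorem \ref{thm:smooth}. To your credit, your elementary inequality $a+b\le(a-b)^2+3b^2$ does hold for non-negative integers and, applied to that same per-edge expression, yields a $(4,0)$-type smoothness bound, so it is a workable endgame --- but only after the expression it is meant to be applied to has been written down and the tribe-load terms disposed of, which is precisely the part you have omitted.
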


\begin{proof}[Proof (lower bound)]
Our construction is inspired by the construction in \cite{cara}.

As in that paper, our example will be formulated not as a routing game, but as a specific load-balancing game in which each player (represented as an edge) can choose between one of exactly two ``servers'' or congestible elements with linear cost functions (represented as the endpoints of the edge). This representation can be converted back into a routing game by the following scheme:
\begin{center}
\begin{tikzpicture}%[every node/.append style={font=\footnotesize}]
\tikzstyle{subj} = [circle, minimum width=4pt, fill, inner sep=0pt]
    \begin{scope}[yscale=0.8]
    \node[subj] (f) at (0,0) {} node[left=0.1 of f] {$f(x)$};
    \node[subj] (g) at (0,1) {} node[left=0.1 of g] {$g(x)$};
    \draw (f) -- node[left] {$i$} (g);

    \node (lol) at (0.8,0.5) {$\mapsto$};

    \node[subj] (s) at (2,0.5) {} node[left=0.1 of s] {$s_i$};
    \node[subj] (a) at (2.5,0) {};
    \node[subj] (b) at (3.5,0) {};
    \node[subj] (c) at (2.5,1) {};
    \node[subj] (d) at (3.5,1) {};
    \node[subj] (t) at (4,0.5) {} node[right=0.1 of t] {$t_i$};
    \draw[->] (s)->(a); \draw[->] (a)->node[below] {$f(x)$} (b); \draw[->] (b)->(t);
    \draw[->] (s)->(c); \draw[->] (c)->node[above] {$g(x)$} (d); \draw[->] (d)->(t);
    \end{scope}
\end{tikzpicture}
\end{center}
For every $k$, we will now construct a game $G_k$ and describe a tribal Nash $\vs^k$. %, as follows.

The game is played on a binary tree with $k+1$ layers of nodes (and hence $k$ layers of edges). Unlike the construction of \cite{cara} (Thm. 2), we do not require to introduce additional
edges below the tree, since the costs in the layers of our construction decay fast enough that the total weight of the final layer is dominated by the rest of the tree.

We set the delay function of the nodes at depths (distances from the root) $i=0,1,\ldots,k-1$ to be $f_i(x)=\left(1/2\right)^i x.$
The cost of the nodes in the final layer shall instead be \emph{twice} that of the preceding layer: $f_k(x)=\left(1/2\right)^{k-1} \cdot 2\cdot x.$

Each of the two players (edges) under a node shall belong to different tribes, say the left edge to tribe {\color{red} 1} and the right edge to tribe {\color{blue} 2}.

The overall construction will then look like this:
\begin{center}
\begin{tikzpicture}
\tikzstyle{s} = [circle, minimum width=4pt, fill, inner sep=0pt]
    \begin{scope}[xscale=0.5,yscale=0.75]
    \node[s] (r) at (0,0) {};
    \node[s] (n10) at (-4,-1) {}; \node[s] (n11) at (4,-1) {};
    \node[s] (n20) at (-6,-2) {}; \node[s] (n21) at (-2,-2) {}; \node[s] (n22) at (2,-2) {}; \node[s] (n23) at (6,-2) {};
    \node[s] (n30) at (-7,-3) {}; \node[s] (n31) at (-5,-3) {}; \node[s] (n32) at (-3,-3) {}; \node[s] (n33) at (-1,-3) {};
    \node[s] (n34) at (1,-3) {}; \node[s] (n35) at (3,-3) {}; \node[s] (n36) at (5,-3) {}; \node[s] (n37) at (7,-3) {};

    \draw[color=red] (r)--(n10)--(n20)--(n30);
    \draw[color=red] (n21)--(n32); \draw[color=blue] (n10)--(n21);
    \draw[color=red] (n22)--(n34);
    \draw[color=red] (n23)--(n36);
    \draw[color=blue]  (r)--(n11)--(n23)--(n37);
    \draw[color=blue] (n22)--(n35); \draw[color=red] (n11)--(n22);
    \draw[color=blue] (n20)--(n31);
    \draw[color=blue] (n21)--(n33);

    \node (dots) at (-7,-3.5) {$\vdots$};
    \node (dots) at (-5,-3.5) {$\vdots$};
    \node (dots) at (7,-3.5) {$\vdots$};

    \node[s] (n40) at (-7,-4) {}; \node[s] (n41) at (-5,-4) {}; \node[s] (n42) at (7,-4) {};
    \node[s] (n50) at (-7.5,-5) {}; \node[s] (n51) at (-6.5,-5) {}; \node[s] (n52) at (-5.5,-5) {}; \node[s] (n53) at (-4.5,-5) {}; \node[s] (n54) at (6.5,-5) {}; \node[s] (n55) at (7.5,-5) {};
    %\node[s] (n60) at (-7.5,-6) {}; \node[s] (n61) at (-6.5,-6) {}; \node[s] (n62) at (-5.5,-6) {}; \node[s] (n63) at (-4.5,-6) {}; \node[s] (n64) at (6.5,-6) {}; \node[s] (n65) at (7.5,-6) {};
    %\node[s] (n70) at (-7.5,-7) {}; \node[s] (n71) at (-6.5,-7) {}; \node[s] (n72) at (-5.5,-7) {}; \node[s] (n73) at (-4.5,-7) {}; \node[s] (n74) at (6.5,-7) {}; \node[s] (n75) at (7.5,-7) {};
    %\node[s] (n80) at (-7.5,-8) {}; \node[s] (n81) at (-6.5,-8) {}; \node[s] (n82) at (-5.5,-8) {}; \node[s] (n83) at (-4.5,-8) {}; \node[s] (n84) at (6.5,-8) {}; \node[s] (n85) at (7.5,-8) {};
    \draw[color=red] (n40)--(n50);%--(n60); \draw[color=red] (n70)--(n80);
    \draw[color=red] (n41)--(n52);%--(n62); \draw[color=red] (n72)--(n82);
    \draw[color=red] (n42)--(n54);%--(n64); \draw[color=red] (n74)--(n84);
    \draw[color=blue] (n40)--(n51);%--(n61); \draw[color=blue] (n71)--(n81);
    \draw[color=blue] (n41)--(n53);%--(n63); \draw[color=blue] (n73)--(n83);
    \draw[color=blue] (n42)--(n55);%--(n65); \draw[color=blue] (n75)--(n85);

    \node (dots) at (1,-5) {$\hdots$};
    %\node (dots) at (-7.5,-6.5) {$\vdots$};
    %\node (dots) at (-6.5,-6.5) {$\vdots$};
    %\node (dots) at (-5.5,-6.5) {$\vdots$};
    %\node (dots) at (-4.5,-6.5) {$\vdots$};
    %\node (dots) at (6.5,-6.5) {$\vdots$};
    %\node (dots) at (7.5,-6.5) {$\vdots$};

    \tikzstyle{t} = [anchor=east]
    \node[t] (lol) at (-8.5, 0) {$1x$};
    \node[t] (lol) at (-8.5,-1) {$\frac{1}{2}x$};
    \node[t] (lol) at (-8.5,-2) {$\frac{1}{4}x$};
    \node[t] (lol) at (-8.5,-3) {$\frac{1}{8}x$};
    \node[t] (lol) at (-8.5,-4) {$\left(\frac{1}{2}\right)^{k-1}x$};
    \node[t] (lol) at (-8.5,-5) {$\left(\frac{1}{2}\right)^{k-1}\cdot 2\cdot x$};
    %\node[t] (lol) at (-8.5,-6) {$\left(\frac{1}{2}\right)^{k-1}\frac{2}{3}\frac{1}{3}x$};
    %\node[t] (lol) at (-8.5,-7) {$\left(\frac{1}{2}\right)^{k-1}\frac{2}{3}\left(\frac{1}{3}\right)^kx$};
    %\node[t] (lol) at (-8.5,-8) {$\left(\frac{1}{2}\right)^{k-1}\frac{2}{3}\left(\frac{1}{3}\right)^kx$};

    \end{scope}

\end{tikzpicture}
\end{center}
We claim that the strategy profile $\vs^k$ in which every player-edge chooses to occupy the ``upper'' (closer to the root) vertex is Nash.

Indeed, by analysing the environment of each edge depending on the layer it is situated in, we can verify the Nash condition for all players.

\begin{enumerate}
\item \label{enum:case1} Intermediate layers, up to exchange of tribes:% and so its environment up to swapping the two tribes is this:

\begin{tabular}{ p{7em} p{23em} }
\begin{tikzpicture}[baseline={($(b.base)+(0,0.2)$)}]
\tikzstyle{s} = [circle, minimum width=4pt, fill, inner sep=0pt]
    \begin{scope}[xscale=0.3,yscale=0.75]
    \node (b) at (0,0.3) {};
    \node[s] (r) at (-2,0) {} node[left=0.2 of r] {$cx$} ;
    \node[s] (n10) at (-4,-1) {} node[left=0.2 of n10] {$\frac{1}{2}cx$}; \node[s] (n11) at (0,-1) {};
    \node[s] (n20) at (-6,-2) {}; \node[s] (n21) at (-2,-2) {};

    \draw[color=red] (r)--(n10)--(n20);
    \draw[color=blue] (n10)--(n21);
    \draw[color=blue]  (r)--(n11);
    \end{scope}
\end{tikzpicture}
 & \vspace{0.2em} In this case, at Nash, the top red player incurs a cost of $1c\cdot 2 + \frac{1}{2}c 2 = 3c$ (for himself on the two-player node above, and his tribesman on the two-player node below). If he were to switch down, his cost would be $0 + \frac{1}{2} c 3\cdot 2$%

, which is also $3c$.
\end{tabular}
\vspace{0.5em}
\item Final layer, up to exchange of tribes:

\begin{tabular}{ p{7em} p{23em} }
\begin{tikzpicture}[baseline={($(b.base)+(0,0.5)$)}]
\tikzstyle{s} = [circle, minimum width=4pt, fill, inner sep=0pt]
    \begin{scope}[xscale=0.3,yscale=0.75]
    \node(b) at (0,0.3) {};
    \node[s] (r) at (-2,0) {} node[left=0.2 of r] {$cx$} ;
    \node[s] (n10) at (-4,-1) {} node[left=0.2 of n10] {$2c \cdot x$}; \node[s] (n11) at (0,-1) {};
    %%\node[s] (n20) at (-4,-2) {};

    \draw[color=red] (r)--(n10);%--(n20);
    \draw[color=blue] (r)--(n11);
    \end{scope}
\end{tikzpicture}
 & \vspace{0.2em} %In this case, at Nash, the top red player incurs a cost of $1c\cdot 2 + \frac{2}{3}c  = \frac{8}{3}c$ (for himself on the two-player node above, and his one tribesman on the two-player node below). If he were to switch down, his cost would be $\frac{2}{3} 2^2 \cdot c$ (for himself and his one tribesman on the two-player node below), which is also $\frac{8}{3}c$.
In this case, at Nash, the red player incurs a cost of $2c$, as nobody is using the bottom node and he is sharing the top node with a player from the other tribe.
If he were to switch down, he would be using the node alone, but his cost would still be $2c$.
\end{tabular}
\end{enumerate}

Summing by layer, the total cost of this assignment then is
\begin{eqnarray*}
C_{G_k}(\vs^k) &=& \sum_{i=0}^{k-1}  {\color{red} 4} \cdot {\color{blue} 2^i} \cdot \left( \frac{1}{2} \right)^i = 4k. %}_{\text{nodes in tree}}:
\end{eqnarray*}
Here, the cost factor due to congestion on each vertex is {\color{red} red}, and the number of vertices in each layer is {\color{blue} blue}. The cost factor on each vertex is black.

On the other hand, the social optimum is at least as good as the strategy $(\vs^k)^*$ where every player uses the node further ``down'' (away from the root).
In this assignment, every vertex except for the root is occupied by exactly one player, so the cost of the optimum is bounded above by the total cost

\begin{eqnarray*}
C_{G_k}(\text{opt}) \leq C_{G_k}((\vs^k)^*) &= & \sum_{i=0}^{k-1}  {\color{red} 1} \cdot {\color{blue} 2^i} \cdot \left( \frac{1}{2} \right)^i \underbrace{-1}_{\text{root}}
+ \underbrace{ {\color{red} 1} \cdot {\color{blue} 2^k} \left(\frac{1}{2}\right)^{k-1} \cdot 2}_{\text{bottom-most row}} \\
&=& k-1+4.
\end{eqnarray*}
Hence we can conclude that as $k\rightarrow\infty$, the ratio between the cost of the Nash equilibrium and the social optimum goes to $4$ from below: that is, for any $\varepsilon$, there is a $k$ such that
$$ \frac{ C_{G_k}(\vs^k) }{ C_{G_k}( \text{opt} ) } \geq  \frac{4k}{k+3} \geq 4 - \varepsilon $$
as claimed. \ycomment{there are two periods}
\end{proof}

In order to establish the upper bound (which holds for any number of tribes), we will first need to introduce an appropriate instance of the common notion of \emph{smoothness}, originally due to Roughgarden \cite{smooth-roughgarden}. Broadly speaking, a smooth game is one in which in expectation, a unilateral deviation towards a different strategy profile moves the deviating player's welfare towards some multiple of its welfare in the target profile. This property can be used to deduce a generic bound on the Price of Anarchy.
\begin{definition}
  Let $G^\tau$ be the tribal extension of a finite cost-minimisation game $G$. $G$ is \emph{$(\lambda,\mu,\tau)$-smooth} if for any strategy profiles $\vs, \vs' \in \Sigma$,
        $$ \sum_{i \in N} \left( c_i^\tau (\vs_i'; \vs) - (c_i^\tau(\vs) - c_i(\vs)) \right) \leq \lambda C(\vs') + \mu C(\vs).$$
  \label{defn:tribalsmoothness}
\end{definition}

Other work in the literature on altruism and social context uses generalisations of smoothness. Of particular note is Chen's notion of $(\mu,\lambda,\alpha)$-altruistic smoothness \cite{chen} and Rahn and Sch\"afer's $\mathcal{SC}$-smoothness \cite{SCsmoothness}. Our definition agrees with Roughgarden's when $\tau$ assigns each player to his own tribe, and with $(\mu, \lambda, \textbf{1})$-altruistic smoothness when $\tau$ assigns all players to the same tribe;
it also turns out that $\mathcal{SC}$-smoothness is a straightforward generalisation.

\begin{theorem}[\cite{SCsmoothness}]
 Let $\mathcal{G}$ be a class of games and $\T = \{ \T_G \}_{G \in \mathcal{G}}$ be a class of partition functions for each game. If for every $G \in \mathcal{G}$ and $\tau \in \T_G$, $G^\tau$ is $(\lambda,\mu,\tau)$-smooth, then
$\PoT( \T, \mathcal{G} ) \leq \lambda/(1-\mu).$
  \label{thm:smooth}
\end{theorem}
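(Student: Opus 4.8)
The plan is to run the standard Roughgarden-style smoothness argument, adapted to the tribal cost functions $c_i^\tau$. Fix a game $G\in\mathcal{G}$, a partition $\tau\in\T_G$, a pure Nash equilibrium $\vs$ of $G^\tau$, and a social optimum $\vs^*$ of $G$. First I would write down the Nash condition in the extended game: since no player $i$ strictly gains by unilaterally switching to their optimal strategy, $c_i^\tau(\vs) \le c_i^\tau(\vs^*_i;\vs_{-i})$ for every $i$ (here $(\vs^*_i;\vs_{-i})$ is the profile written $(\vs_i';\vs)$ in Definition \ref{defn:tribalsmoothness}). Summing over all players gives $\sum_{i\in N} c_i^\tau(\vs) \le \sum_{i\in N} c_i^\tau(\vs^*_i;\vs_{-i})$.

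Next I would subtract $\sum_{i} \bigl(c_i^\tau(\vs) - c_i(\vs)\bigr)$ from both sides. On the left this cancels the two copies of $\sum_i c_i^\tau(\vs)$ and leaves exactly $\sum_i c_i(\vs) = C(\vs)$; on the right I invoke $(\lambda,\mu,\tau)$-smoothness with the target profile $\vs'=\vs^*$, which bounds $\sum_{i}\bigl(c_i^\tau(\vs^*_i;\vs_{-i}) - (c_i^\tau(\vs)-c_i(\vs))\bigr)$ by $\lambda C(\vs^*) + \mu C(\vs)$. Hence $C(\vs) \le \lambda C(\vs^*) + \mu C(\vs)$, and assuming $\mu < 1$ (implicit for the bound to be finite; the statement is vacuous otherwise) this rearranges to $C(\vs) \le \tfrac{\lambda}{1-\mu} C(\vs^*)$. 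Since $\vs^*$ minimises social cost and $\vs$ was an arbitrary equilibrium of an arbitrary tribal extension admitted by the class, taking the supremum over $G\in\mathcal{G}$ and $\tau\in\T_G$ gives $\PoT(\T,\mathcal{G}) \le \lambda/(1-\mu)$.

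For the other equilibrium concepts in Definition \ref{defn:pot} (mixed, correlated, coarse correlated), the same computation goes through in expectation: one takes expectations of the deviation inequality over the equilibrium distribution, uses linearity of expectation together with the fact that the smoothness inequality is required to hold for \emph{every} pair of profiles, and notes that in a (coarse) correlated equilibrium each player's expected equilibrium cost is at most their expected cost from deviating to the fixed profile $\vs^*$. The strong-equilibrium case needs nothing new, since a strong equilibrium is in particular a pure Nash equilibrium of $G^\tau$.

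The argument is essentially routine; the only point requiring care is the bookkeeping in the collapse of the left-hand side — checking that the correction term $-(c_i^\tau(\vs)-c_i(\vs))$ built into Definition \ref{defn:tribalsmoothness} is precisely what makes the internal tribal terms cancel after summing the Nash inequalities, leaving the genuine social cost $C(\vs)$ rather than $C_{G^\tau}(\vs)$. Beyond that and the harmless assumption $\mu<1$, there is no real obstacle: the extension to randomised equilibria uses only linearity of expectation and no further structure of the games.
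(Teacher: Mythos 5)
Your argument is exactly the paper's: sum the tribal Nash inequalities $c_i^\tau(\vs)\le c_i^\tau(\vs^*_i;\vs_{-i})$, subtract the correction term $\sum_i(c_i^\tau(\vs)-c_i(\vs))$ so the left side collapses to $C(\vs)$, apply $(\lambda,\mu,\tau)$-smoothness with target $\vs^*$, and rearrange. The remarks on $\mu<1$ and on other equilibrium concepts are fine but go beyond what the paper's (purely pure-Nash) proof records.
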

\begin{proof}
Fix a $G \in \mathcal{G}$ and $\tau \in \T_G$. Let $\vs$ be a tribal Nash equilibrium of game $G^\tau$.
Then
\begin{eqnarray*}
C(\vs) &=& \sum_{i \in N} c_i(\vs) = \sum_{i \in N} (c_i^\tau(\vs) - c_i^\tau(\vs) + c_i(\vs)) \\
&\leq& \sum_{i \in N} c_i^\tau (\vs^*_i; \vs_{-i})  - (c_i^\tau(\vs) - c_i(\vs)) \\
&\leq& \lambda C(\vs^*) + \mu C(\vs),
\end{eqnarray*}
where the first inequality follows from the tribal Nash condition. Since this is true for all games and partitions, we have $\PoT \leq \lambda/ (1 - \mu)$.
\end{proof}

The following bound, which is in the spirit of several similar ones in the literature (e.g. \cite{chen} Lemma 4.4), will be a key ingredient
in the proof to follow.
\begin{lemma} For integers $x,y\geq 0$, $ x(y-x) + xy + x + y \leq \frac83 y^2 + \frac13 x^2. $
\label{lem:quad}
\end{lemma}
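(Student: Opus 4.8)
The plan is to reduce the claim to a single polynomial inequality in $x$ and $y$ and then dispatch it by viewing it as a quadratic in $x$ with $y$ as a parameter. Expanding the left-hand side, $x(y-x)+xy+x+y = 2xy - x^2 + x + y$, so after multiplying through by $3$ the assertion becomes
$$ h(x,y) \;:=\; 4x^2 - 6xy + 8y^2 - 3x - 3y \;\geq\; 0 $$
for all integers $x,y \geq 0$. Note that $h$ is not of the form ``positive definite quadratic form minus linear part'' that is globally nonnegative over $\R^2$ — indeed $h(1,1) = 0$ — so integrality will genuinely be used, and a naive sum-of-squares decomposition cannot suffice on its own.

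First I would treat $h(\cdot, y)$ as a quadratic in $x$ with positive leading coefficient $4$. Its discriminant is $\Delta(y) = (6y+3)^2 - 16(8y^2 - 3y) = -92y^2 + 84y + 9$. For $y \geq 2$ we have $92y^2 = y\cdot 92y \geq 184y > 84y + 9$, so $\Delta(y) < 0$; hence $h(x,y) > 0$ for every real $x$, which disposes of all cases with $y \geq 2$ in one stroke. It then remains to handle $y \in \{0,1\}$, where I would simply factor: $h(x,0) = x(4x-3)$, which is $0$ at $x=0$ and at least $1$ for every integer $x \geq 1$; and $h(x,1) = 4x^2 - 9x + 5 = (x-1)(4x-5)$, which takes the values $5, 0, 3$ at $x = 0, 1, 2$ and is a product of two positive integers (hence positive) for $x \geq 2$. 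Combining the three cases gives $h \geq 0$ throughout, with equality exactly at $(x,y)\in\{(0,0),(1,1)\}$.

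The only delicate point — and precisely the reason the statement is restricted to integers — is the band $y \in \{0,1\}$: there $\Delta(0) = 9$ and $\Delta(1) = 1$ are nonnegative, so $h(\cdot,y)$ really does dip below $0$ between its real roots ($0$ and $3/4$ for $y=0$; $1$ and $5/4$ for $y=1$), and one must observe that no nonnegative integer other than the tightness points lies in those short intervals. I expect this small-case check to be the main obstacle in the sense that it is where the argument cannot be made ``uniform''; but concretely it amounts to nothing more than evaluating the two factored quadratics above at $x = 0$ and $x = 1$.
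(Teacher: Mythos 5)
Your proof is correct, and it takes a genuinely different route from the paper's. Both arguments begin by clearing denominators to the equivalent claim $2xy+x+y\leq \frac83 y^2+\frac43 x^2$ (your $h\geq 0$), but the paper then proceeds by induction along the diagonals: it verifies the inequality on the two axes $\{(x,0)\}$ and $\{(0,y)\}$ as base cases and shows that the increment $f(x+1,y+1)-f(x,y)=2x+2y+4$ never exceeds the increment $g(x+1,y+1)-g(x,y)=\frac{16}{3}y+\frac83 x+4$, so the inequality propagates from the axes to all of $\mathbb{N}^2$. You instead fix $y$ and treat $h(\cdot,y)$ as a quadratic in $x$: the discriminant $-92y^2+84y+9$ is negative for all $y\geq 2$, killing those cases over the reals in one stroke, and the residual band $y\in\{0,1\}$ is handled by factoring. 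Your version has the advantage of isolating exactly where integrality is needed (the short intervals $(0,3/4)$ and $(1,5/4)$ where $h$ genuinely dips negative) and of identifying the equality cases $(0,0)$ and $(1,1)$ explicitly, which is useful for seeing why the resulting smoothness constants, and hence the bound of $4$, are tight; the paper's induction is arguably shorter to write and is the standard device in the congestion-game literature. One small quibble: early on you offer $h(1,1)=0$ as evidence that $h$ is not globally nonnegative over $\mathbb{R}^2$, which by itself shows nothing (a nonnegative function can vanish at a point); the actual evidence is the negativity you later exhibit between the real roots for $y\in\{0,1\}$, so the remark is harmless but should be rephrased.
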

\begin{proof}
We will equivalently show that
$$ 2xy+x+y \leq \frac83 y^2 + \frac43 x^2. $$
Let $f(x,y)=2xy+x+y$, $g(x,y)=\frac83 y^2 + \frac43 x^2$. We have
$$ f(x,0)=x \text{ and } f(0,y)=y; $$
also,
$$ g(x,0)=\frac43 x^2 \text{ and } g(0,y)=\frac83 y^2, $$
and so it is easy to verify that for all such pairs, $f(x,y)\leq g(x,y)$ as required.
Using this as the base case, observe now that
\begin{eqnarray*}
& & f(x+1,y+1)-f(x,y)= 2x+2y+4 \\
&\leq& g(x+1,y+1)-g(x,y)=\frac83 (2y+1) + \frac43 (2x+1) = \frac{16}3y+\frac{8}3x + 4
\end{eqnarray*}
for all $x,y\geq 0$. So if $f(x,y)\leq g(x,y)$, then $f(x+1,y+1)\leq g(x+1,y+1)$.
Thus, we can derive the inequality for all $(x,y)\in \mathbb{N}^2$ inductively.
\end{proof}

\begin{lemma}
  Let $G^\tau$ be a $\tau$-tribal extension of an atomic linear routing game. Then $G^\tau$ is ($8/3,1/3,\tau$)-smooth.
\label{lem:alrgsmooth}
\end{lemma}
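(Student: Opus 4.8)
The plan is to invoke the smoothness machinery of Theorem \ref{thm:smooth}: once $G^\tau$ is shown to be $(8/3,1/3,\tau)$-smooth, the bound $\PoT\le(8/3)/(1-1/3)=4$ is immediate. Fix two profiles $\vs,\vs'$ and, for each edge $e$, write $n_e=\#\{i:e\in\vs_i\}$, $n'_e=\#\{i:e\in\vs'_i\}$ for the loads, $n_{e,T}$ and $n'_{e,T}$ for the number of players of tribe $T$ using $e$ in $\vs$ resp.\ $\vs'$, and $m_e=\#\{i:e\in\vs_i\cap\vs'_i\}$. The first step is to expand, for each player $i$,
$$ c_i^\tau(\vs'_i;\vs_{-i}) - \big(c_i^\tau(\vs)-c_i(\vs)\big) \;=\; c_i(\vs'_i;\vs_{-i}) \;+\; \sum_{j\ne i:\;\tau(j)=\tau(i)}\big(c_j(\vs'_i;\vs_{-i})-c_j(\vs)\big), $$
note that a unilateral move of $i$ changes each load $n_e$ only by $\pm1$ (so each summand is $\alpha_e$ times a $\pm1$ difference of indicators, summed over $e\in\vs_j$), sum over all players, and regroup the double sums edge-by-edge. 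This produces the identity
$$ \sum_{i\in N}\!\Big(c_i^\tau(\vs'_i;\vs_{-i}) - \big(c_i^\tau(\vs)-c_i(\vs)\big)\Big) \;=\; \sum_e \alpha_e\Big( n_e n'_e + n_e + n'_e - 2m_e + \sum_T n_{e,T}\,(n'_{e,T}-n_{e,T}) \Big), $$
in which the first four summands are exactly the expression arising in the classical selfish smoothness proof (which gives $(\lambda,\mu)=(5/3,1/3)$) and the term $\sum_T n_{e,T}(n'_{e,T}-n_{e,T})$ is the tribal correction that pushes $\lambda$ up to $8/3$.

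Since $m_e\ge0$, it suffices to prove the scalar inequality
$$ ab + a + b + \sum_T a_T(b_T-a_T) \;\le\; \frac83 b^2 + \frac13 a^2 $$
for all nonnegative integers with $a=\sum_T a_T$ and $b=\sum_T b_T$ (instantiated with $a=n_e$, $b=n'_e$, $a_T=n_{e,T}$, $b_T=n'_{e,T}$), and then sum against $\alpha_e$. The crucial estimate is to control the tribal sum through $M:=\max_T a_T\le a$: since $a_T\le M$ for every $T$ we get $\sum_T a_T b_T\le Mb$, and by integrality $\sum_T a_T^2\ge M^2+(a-M)$ (the maximal tribe alone contributes $M^2$, and the remaining integer mass $a-M$ contributes at least $a-M$). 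Hence $\sum_T a_T(b_T-a_T)\le Mb-M^2-(a-M)$, and the claim reduces to
$$ ab + b + M(b+1-M) \;\le\; \frac83 b^2 + \frac13 a^2 \qquad\text{for integers }0\le M\le a. $$

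To close it I would split on the size of $2a$ relative to $b$. If $2a\le b$, then $M\le a\le b/2\le(b+1)/2$, so $M(b+1-M)\le a(b+1-a)$ (the parabola in $M$ is increasing up to $(b+1)/2$), and the inequality becomes $2ab+a+b\le\frac83 b^2+\frac43 a^2$ — which is precisely Lemma \ref{lem:quad}. If instead $2a\ge b+1$, then $M(b+1-M)\le(b+1)^2/4$, and it remains to check $ab+b+(b+1)^2/4\le\frac83 b^2+\frac13 a^2$ for integers $a,b$ with $2a\ge b+1$; reading the difference as a convex quadratic in $a$ whose minimiser $a=\frac32 b$ is feasible, this follows from $\frac53 b^2-\frac32 b-\frac14\ge0$ for $b\ge2$, while for $b\in\{0,1\}$ it reduces to $\frac13 a^2-\frac14\ge0$ resp.\ $\frac13(a-1)(a-2)\ge0$, both of which hold for the relevant integer values of $a$. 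Reassembling the per-edge bounds yields $\sum_i\big(c_i^\tau(\vs'_i;\vs_{-i})-(c_i^\tau(\vs)-c_i(\vs))\big)\le\frac83 C(\vs')+\frac13 C(\vs)$, i.e. $(8/3,1/3,\tau)$-smoothness.

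The step I expect to be the main obstacle is the tribal estimate: bounding $\sum_T a_T(b_T-a_T)$ by $ab-\sum_T a_T^2$ alone, or by $b-a$ alone, is too lossy to reach the constant $8/3$, and it is essential to use the paired bounds $\sum_T a_T b_T\le Mb$ and $\sum_T a_T^2\ge M^2+(a-M)$ through $M=\max_T a_T$, which are simultaneously tight. The only remaining delicacy is the $b\in\{0,1\}$ integrality check, which is exactly where the constant $8/3$ is attained — already two players sharing a single edge within one tribe gives equality in the scalar inequality.
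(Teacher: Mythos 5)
Your proof is correct, and it diverges from the paper's argument at precisely the one step where the two approaches genuinely differ. Both proofs start from the same player-by-player expansion of $c_i^\tau(\vs'_i;\vs_{-i})-c_i^\tau(\vs)+c_i(\vs)$ and the same edge-by-edge regrouping (the paper passes to an inequality by discarding the $e\in\vs_i\cap\vs'_i$ terms, which in your bookkeeping is the harmless $-2m_e\le 0$), arriving at the same per-edge quantity $n_en'_e+n_e+n'_e+\sum_T n_{e,T}(n'_{e,T}-n_{e,T})$. The paper then collapses the tribal sum onto the aggregate loads, asserting $\sum_T n_{e,T}(n'_{e,T}-n_{e,T})\le n_e(n'_e-n_e)$ ``using $n_{e,T}\le n_e$,'' and finishes with a single application of Lemma \ref{lem:quad} in the aggregate variables. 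That collapsing step is not valid as stated: multiplying $n'_{e,T}-n_{e,T}$ by the larger factor $n_e$ only gives an upper bound when $n'_{e,T}-n_{e,T}\ge 0$, and equivalently $-\sum_T n_{e,T}^2\le -n_e^2$ points the wrong way; with two tribes, $n_{e,1}=n_{e,2}=1$ and $n'_e=0$, the left side is $-2$ while the right side is $-4$. Your route — keeping the tribal term, bounding it by $M(b+1-M)-a$ through $M=\max_T a_T$ together with the integrality estimate $\sum_T a_T^2\ge M^2+(a-M)$, and splitting on $2a\le b$ versus $2a\ge b+1$ — avoids this entirely; I checked the identity, the case analysis, and the boundary cases $b\in\{0,1\}$, and your chain is tight at $(a,b)=(2,1)$ with two tribes, confirming that $8/3$ is the right constant for this decomposition. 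The trade-off is that the paper's version is shorter and needs Lemma \ref{lem:quad} exactly once per edge, whereas you invoke it only in the $2a\le b$ case and must verify a second quadratic inequality by hand; in exchange, every step of yours is actually correct, and your argument doubles as a repair of the aggregation step in the published proof.
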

\begin{proof}
  Let $\vs$ and $\vs^*$ be two strategy profiles in any $\tau$-extension of any atomic linear congestion game.
  We will use
  $ n_e(\s) = \#\{i \mid e\in \s_i\} $
  to denote the number of players using edge $e$ in strategy $\s$, and
  $ n_e^t(\s) = \#\{i \mid e\in \s_i, \tau(i)=t\} $
  be the number of players on edge $e$ that belong to tribe $t$.
  Then $ c_i^\tau( \vs) = \sum_e \alpha_e n_e^{\tau(i)} (\vs) n_e(\vs). $
  For each player $i$, we can compute the change in cost of $i$'s tribe as she switches from $\vs$ to $\vs^*$,
  \begin{align*}
    c_i^\tau & (\vs_i^*; \vs_{-i}) - c_i^\tau(\vs)
     = \sum_{e \in \vs_i^* \setminus \vs_i} \alpha_e ( (n_e^{\tau(i)} (\vs) + 1) ( n_e (\vs) + 1) - n_e^{\tau(i)} (\vs) n_e (\vs) ) \\
       &+ \sum_{e \in  \vs_i \setminus \vs_i^*} \alpha_e ( (n_e^{\tau(i)} (\vs) - 1) ( n_e (\vs) - 1) - n_e^{\tau(i)} (\vs) n_e (\vs) ) \\
    &\leq \sum_{e \in \vs_i*} \alpha_e (n_e^{\tau(i)} (\vs) + n_e (\vs) + 1) + \sum_{e \in \vs_i} \alpha_e ( 1 - n_e^{\tau(i)} (\vs) - n_e (\vs) ).
  \end{align*}
  Here, the last inequality is because we can add the (always positive) contribution of edges $e\in \vs_i \cap \vs_i^*$.
  Then, substituting into the left hand side of Definition \ref{defn:tribalsmoothness} and using that $c_i(\vs)=\sum_{e\in \vs_i} \alpha_e n_e(\vs)$, we find that
\begin{eqnarray*}
& & \sum_{i\in N} (c_i^\tau (\vs_i^*; \vs_{-i}) - c_i^\tau(\vs) + c_i(\vs)) \\
&\leq& \sum_{\text{tribes }t} \sum_{i \in N: \tau(i)=t} \left( \sum_{e \in \vs_i^*} \alpha_e (n_e^t (\vs) + n_e (\vs) + 1) + \sum_{e \in \vs_i} \alpha_e ( 1 - n_e^t (\vs)) \right) \\
&=& \sum_{\text{tribes }t} \sum_{\text{edges }e} \alpha_e \left( n_e^t(\vs^*) (n_e^t(\vs) + n_e(\vs)+1) + n_e^t(\vs) ( 1-n_e^t(\vs) ) \right)
\end{eqnarray*}
by changing the order of summation and combining the $n_e^t(\vs^*)$ (resp. $n_e^t(\vs)$) identical summands on each edge; this is
\begin{eqnarray*}
&=& \sum_{\text{tribes }t} \sum_{\text{edges }e} \alpha_e \left( n_e^t(\vs) ( n_e^t(\vs^*) - n_e^t(\vs) ) + n_e^t(\vs^*) n_e(\vs) + n_e^t(\vs^*) + n_e^t(\vs)  \right) \\
&\leq& \sum_{\text{edges }e} \alpha_e \left( n_e(\vs) ( n_e(\vs^*)-n_e(\vs) ) + n_e(\vs^*) n_e(\vs) + n_e(\vs^*) + n_e(\vs)  \right)
\end{eqnarray*}
by summing over tribes and using $n_e^t(\vs)\leq n_e(\vs)$ (as the tribes are a partition of all players using edge).
By Lemma \ref{lem:quad}, we conclude that this is
\begin{eqnarray*}
&\leq& \sum_{\text{edges }e} \alpha_e \left( \frac83 n_e(\vs^*)^2 + \frac 13 n_e(\vs)^2) \right) = \frac83 C(\vs^*) + \frac13 C(\vs). \qedhere
\end{eqnarray*}
\end{proof}

\begin{proof}[Proof (upper bound of Thm. \ref{thm:alrg3k})]
Follows from Lemma \ref{lem:alrgsmooth} and Thm. \ref{thm:smooth}.
\end{proof}

The proof goes through unchanged for general atomic congestion games with linear costs.

\section{Conclusions}
We have introduced the concept of a $\tau$-tribal extension of a given strategic game $G$
and used it to define the Price of Tribalism,
a measure of the badness of possible
equilibria when players are altruistic among each other within groups and indifferent
towards members of other groups. By means of three different examples, we have demonstrated
that in many cases, the Price of Tribalism can be worse than either the corresponding
measure when players are completely selfish or when they are completely altruistic.
While we have mostly focused on scenarios in which the members of a tribe
cannot coordinate their actions, our analysis of the network contribution game
shows that this can still be the case even when full coordination at the tribal
level is in fact possible.

While all examples giving rise to our lower bounds were based on the idea that a player would be willing
to forego a deviation beneficial to themselves and the commons for the local benefit of their tribe,
it remains an open question how generally this intuition can give rise to lower bounds on Price of Tribalism
that beat known upper bounds on the pure Price of Anarchy. For some games, such as the opinion-forming game
of \cite{bindel}, we suspect that even though tribalism gives rise to new equilibria, those are never worse
than the ones that arise from selfishness. Future work will focus on investigating other games and  identifying
 general conditions that are sufficient for the Price of Tribalism to exceed the Price of Anarchy.

\section*{Acknowledgements}

We would like to thank Jerry Anunrojwong, Ioannis Caragiannis, Artur Gorokh, Bart de Keijzer, Bobby Kleinberg, Guido Schaefer and {\'{E}}va Tardos, as well as the
anonymous reviewers, for helpful feedback and discussions regarding the paper. M.S. was supported by NSF grants IIS-1703846 and
IIS-1718108, ARO grant W911NF-17-1-0592, and a grant from the Open Philanthropy project.

\bibliographystyle{alpha}
\bibliography{tribib}

\clearpage

\appendix

\section{General miscellanea}
\label{sec:misc}

The class of equilibria we consider in the paper is an analogue of the standard notion of Nash equilibrium:
\begin{definition}
  Given a finite cost minimisation game $G=(\P, (\Sigma_i)_{i \in \P}, (c_i)_{i\in \P})$, $\vs \in \Sigma$ is a \emph{pure Nash equilibrium} if for all $i \in \P$ and $\vs' \in \Sigma$,
  $$c_i (\vs) \leq c_i (\vs'_i;\vs_{-i}).$$
\end{definition}
Adapting this definition for the tribal extension simply requires us to change the cost function that the players are now minimising.
\begin{definition}
  Given a $\tau$-tribal extension of a finite cost minimisation game $G=(\P, (\Sigma_i)_{i \in \P}, (c_i)_{i\in \P})$, $\vs \in \Sigma$ is a (pure) \emph{tribal Nash equilibrium} if for all $i \in \P$ and $\vs' \in \Sigma$,
  $$c_i^\tau (\vs) \leq c_i^\tau (\vs'_i;\vs_{-i}).$$
  \label{defn:tribalnash}
\end{definition}

One can define tribal mixed equilibrium, correlated equilibrium, and coarse correlated equilibrium in exactly the same way. In this paper, we will focus on
pure equilibria, which already exhibit interesting differences in the presence of tribes.

The Price of Anarchy \cite{anarchy} is a standard notion for measuring the inefficiency of selfish behaviour for classes of games.
\begin{definition}
Let $\mathcal{G}$ be a class of finite cost-minimisation games with social cost functions $C_G$ for each $G \in \mathcal{G}$.
The \emph{Price of Anarchy} of $\mathcal{G}$ is
$$\PoA (\mathcal{G})= \sup_{G \in \mathcal{G}} \frac{\sup_{\vs \in S_G} C_G(\s)}{C_G(\vs^*)}, $$
where $S_G$ is the set of pure Nash equilibria of $G$.
\end{definition}
This definition is frequently made with the set of strategy profiles $S_G$ to be compared against the optimum as an explicit parameter,
in which case the standard Price of Anarchy is obtained with the appropriate set of pure Nash equilibria of $G$, whereas
the Price of Tribalism is obtained when the set of all tribal equilibria under all possible partition functions, $\bigcup_{\tau\in \T} S_{G^\tau}$, is used
instead.

We can note some basic observations about the PoT that follow immediately from its definition.

\begin{proposition} (Basic facts)
\begin{enumerate}
\item If $\T_G \subseteq \T'_G$ for all $G\in \G$, then $\PoT(\T, \G) \leq \PoT(\T',\G)$. Hence $\mathrm{PoA}(\G) \leq \PoT(\Tstar,\G)$.
\item If the functions $\tau\in\T_G$ map each player to a distinct tribe for all $G$ (i.e. $i\neq j \Rightarrow \tau(i)\neq \tau(j)$), then the resulting tribal extensions are equivalent to the original game, and so $\mathrm{PoA}(\G) = \PoT(\T,\G)$.
\item If the class of games $\G$ is closed under some means of adding players whose cost/utility is always zero, then for any $k< k'$, any NE in a tribal extension with $k$ tribes corresponds to an NE of same value in a game with $k'$ tribes where the additional $k'-k$ tribes consist of zero players, and hence $\PoT(\Tn{k},\G) \leq \PoT(\Tn{k'},\G)$.

\end{enumerate}
\label{prop:basic}
\end{proposition}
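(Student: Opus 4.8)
The plan is to unwind Definition \ref{defn:pot}, which expresses $\PoT(\T,\mathcal{G})$ as a supremum, taken over all $G \in \mathcal{G}$ and all $\tau \in \T_G$, of the ratio $\rho(G,\tau) := \bigl( \sup_{\vs \in S_{G^\tau}} C_G(\vs) \bigr) / \bigl( \inf_{\vs \in \Sigma} C_G(\vs) \bigr)$; each of the three items then reduces to a short bookkeeping argument about this supremum.

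For item (1), the observation is that $\rho(G,\tau)$ depends only on the pair $(G,\tau)$, and the hypothesis $\T_G \subseteq \T'_G$ for all $G$ means that the set of pairs $\{ (G,\tau) : G \in \mathcal{G},\ \tau \in \T_G \}$ over which the supremum defining $\PoT(\T,\mathcal{G})$ is taken is contained in the corresponding set for $\PoT(\T',\mathcal{G})$; since a supremum over a subset cannot exceed the supremum over the superset, $\PoT(\T,\mathcal{G}) \leq \PoT(\T',\mathcal{G})$. For the ``hence'', I would combine this with item (2): let $\T^{\mathrm{id}}$ be the class whose only member for each $G$ is the partition assigning every player its own tribe; then $\T^{\mathrm{id}}_G \subseteq \Tstar_G$, so item (1) gives $\PoT(\T^{\mathrm{id}},\mathcal{G}) \leq \PoT(\Tstar,\mathcal{G})$, while item (2) identifies $\PoT(\T^{\mathrm{id}},\mathcal{G})$ with $\mathrm{PoA}(\mathcal{G})$.

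For item (2), note that if $\tau(i) \neq \tau(j)$ whenever $i \neq j$, then the only $j$ with $\tau(j) = \tau(i)$ appearing in the sum defining $c_i^\tau$ is $j = i$, so $c_i^\tau = c_i$ for every $i$ and hence $G^\tau = G$ as a tuple; in particular $S_{G^\tau} = S_G$. Then $\rho(G,\tau)$ is literally the ratio appearing in the definition of the Price of Anarchy of $G$, and taking the supremum over $G \in \mathcal{G}$ gives $\PoT(\T,\mathcal{G}) = \mathrm{PoA}(\mathcal{G})$.

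For item (3), fix $k < k'$, a pair $(G,\tau)$ with $\tau \in \Tn{k}_G$, and a tribal Nash equilibrium $\vs$ of $G^\tau$; I would exhibit a pair realising the same ratio for $\PoT(\Tn{k'},\mathcal{G})$. Using the closure hypothesis, form $G'$ by adjoining to $G$ some $k'-k$ new players whose cost is identically zero --- isolated dummy players, so that they neither appear in, nor alter, any existing player's cost, nor the social cost --- and place each new player in its own fresh singleton tribe, obtaining $\tau' \in \Tn{k'}_{G'}$ (the original $k$ tribes are still nonempty, so $\tau'$ indeed has exactly $k'$ tribes). Extending $\vs$ by the new players' forced strategies yields a profile $\vs'$ with $C_{G'}(\vs') = C_G(\vs)$, with the social optima of $G'$ and $G$ equal, and with the tribal Nash condition for $\vs'$ in $(G')^{\tau'}$ being exactly the condition for $\vs$ in $G^\tau$ on the old players and vacuous on the new ones; thus $\vs' \in S_{(G')^{\tau'}}$ and $\rho(G',\tau') = \rho(G,\tau)$. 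Taking suprema gives $\PoT(\Tn{k},\mathcal{G}) \leq \PoT(\Tn{k'},\mathcal{G})$. All three parts are routine once the definitions are spelled out; the only place that calls for a little care is being explicit in item (3) about what ``adding players whose cost/utility is always zero'' is taken to mean, so that the equilibrium set and the social optimum both transfer unchanged, and I do not anticipate a genuine obstacle beyond that.
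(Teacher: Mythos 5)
Your proposal is correct and follows essentially the same route as the paper, which states these facts as immediate consequences of the definitions (with the dummy-player construction for item (3) already sketched inside the statement itself); you have merely made the bookkeeping explicit, including the one point that genuinely deserves care, namely that the adjoined zero-cost players in item (3) must leave both the equilibrium set and the social optimum unchanged.
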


\section{Missing proofs}

\subsection{About social grouping games}
\label{sec:a.social}

\ycomment{move statement to main paper?}
\begin{theorem}
  The selfish and altruistic Price of Anarchy of $k$-grouping games are both $k$.
  \label{thm:friendshipk}
\end{theorem}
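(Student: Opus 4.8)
**The plan is to prove the two equalities $\PoA = k$ and altruistic $\PoA = k$ by exhibiting, for each case, a matching lower-bound example and a matching upper-bound argument.** For the lower bound, the natural candidate is the $k$-clique generalisation of the $4$-cycle example in Figure~\ref{fig:friendalt}: take $k$ disjoint ``gadgets'', each a directed cycle of length $k$ with all edge-weights $1$, and let the natural equilibrium put all players of gadget $m$ into clique $m$, giving each player utility $1$ (just their one incoming cycle-edge), while the optimum puts everyone into a single clique so that each player reaps all $k$ of their incoming edges, yielding welfare ratio $k$. One checks this is a pure Nash equilibrium (and a fully-altruistic one): a player switching cliques neither gains nor loses, since in each of the two cliques involved exactly one player benefits them by $1$ and they benefit exactly one player by $1$, so both the private utility and the social welfare are unchanged.

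For the upper bound on the \emph{selfish} $\PoA$, I would sum the Nash condition over all players: at a pure equilibrium $\s$, each player $i$ weakly prefers their current clique to the clique containing the most friends they could join, hence certainly to \emph{each} of the other $k-1$ cliques. Summing the resulting $k-1$ inequalities per player (one for each alternative clique) gives
$(k-1)\sum_i\sum_{j:\s(i)=\s(j)} u_{ji} \geq \sum_i\sum_{j:\s(i)\neq\s(j)} u_{ji}$,
so adding $\sum_i\sum_{j:\s(i)=\s(j)} u_{ji}$ to both sides yields $k\cdot U(\s) \geq \sum_i\sum_j u_{ji} = U(\s^*)$, which is the bound. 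For the \emph{altruistic} $\PoA$, the relevant Nash condition is that moving player $i$ to another clique does not increase social welfare; the change in social welfare upon such a move is (up to the symmetric bookkeeping already used in the discussion of Theorem~\ref{thm:socialaltruism}) controlled by the same quantities, and summing over players and over the $k-1$ alternative cliques gives the identical inequality. Concretely, full altruism is the special case $\tau \in \T^{(1)}$ of Theorem~\ref{thm:pot2grouping}'s method, and the analogous $k$-clique computation (which also appears in the proof of Theorem~\ref{thm:potkgrouping}, specialised to one tribe) produces the factor $k$ rather than $2k-1$, because with a single tribe the ``$u_{ij}$'' terms coincide with the ``$u_{ji}$'' terms and no extra doubling occurs.

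**The main obstacle** is getting the altruistic upper bound's bookkeeping exactly right: one must verify that when a single player changes clique, the total change in $U$ decomposes cleanly into (loss of $u_{ji}$ for $j$ in the old clique) $+$ (loss of $u_{ij}$ for $j$ in the old clique) minus the corresponding gains in the new clique, and then argue that choosing the new clique to be \emph{each} of the $k-1$ alternatives in turn and summing does not double-count in a way that spoils the constant $k$. I expect this to go through because the edges incident to $i$ are partitioned by which clique the other endpoint sits in, so the sum over alternative cliques of ``edges to that clique'' is exactly ``edges leaving $i$'s clique'', giving the clean telescoping above. The lower-bound verification is routine once the gadget is drawn.
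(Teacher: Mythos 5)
Your two upper-bound arguments are correct and are essentially the paper's own: summing the (selfish, resp.\ altruistic) Nash condition over the $k-1$ alternative cliques and over all players, then adding the ``own clique'' term, gives $k\,U(\s)\geq \sum_{i,j}u_{ji}=U(\s^*)$; and in the altruistic case the symmetric double sum $\sum_{i,j:\s(i)=\s(j)}u_{ij}=\sum_{i,j:\s(i)=\s(j)}u_{ji}$ indeed collapses the $(u_{ij}+u_{ji})$ bookkeeping to the same factor $k$. That part is fine.

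The lower bound, however, is broken. You take $k$ \emph{disjoint} gadgets, each a directed $k$-cycle with unit weights, and place all of gadget $m$ into clique $m$. But in a directed $k$-cycle every player has exactly \emph{one} incoming edge, and its source lies in the same gadget, hence in the same clique under your assignment. So every player already collects their entire available utility, the configuration you propose as the bad equilibrium has welfare $k^2$, and merging everyone into one clique adds nothing (the gadgets are disjoint, so there are no cross-clique edges to recover): the ratio is $1$, not $k$. Your own verification sentence reveals the inconsistency --- you simultaneously assert that each player has ``one incoming cycle-edge'' and that at the optimum they ``reap all $k$ of their incoming edges.'' The structural point you are missing is that a bad equilibrium must forgo roughly $(k-1)/k$ of the total edge weight across clique boundaries while keeping each player exactly indifferent to joining \emph{each} other clique; disjoint monochromatic gadgets forgo nothing. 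The paper's construction achieves this with $2k$ players $a_1,\dots,a_k,b_1,\dots,b_k$, weights $u_{a_ib_i}=u_{b_ia_i}=1$ and $u_{a_ia_j}=u_{b_ib_j}=1$ for $i\neq j$ (all else $0$), and cliques $\{a_i,b_i\}$: each player keeps $1$ unit from their partner and forgoes exactly one unit to each of the $k-1$ other cliques, so any switch is break-even both selfishly and altruistically, while the welfare is $2k$ against an optimum of $2k^2$.
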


\begin{proof}
  In this proof, we will use the notation $\s(i) = c$ for player $i$ choosing clique $c$ in strategy $\s$, and any unspecified sums over $c$ or $c'$ will be over all cliques and $i,j$ over all players.
  Recall that the social welfare for a strategy $\s$ is
  \[ U(\s) =  \sum_c \sum_{i,j: \s(i) = \s(j) = c} u_{ji}, \]
  and the social welfare of the socially optimum strategy (all players choosing the same clique) is
  \[ U(\s^*) =  \sum_{i,j} u_{ji}. \]

  First, we will establish an upper bound on the selfish Price of Anarchy. Let $\s$ be a Nash equilibrium of a $k$-grouping game and fix a clique $c$. The Nash condition says that for all players $i$ such that $\s(i) = c$ and all cliques $c'$,
  \[ \sum_{j: \s(j) = c} u_{ji} \geq \sum_{j: \s(j) = c'} u_{ji}. \]
Summing over all cliques $c'$ and players $i$ such that $\s(i) = c$, we get
  \begin{align*}
    k \sum_{i: \s(i) = c} \sum_{j: \s(j) = c} u_{ji} & \geq \sum_{i \in c} \sum_{c'}  \sum_{j: \s(j) = c'} u_{ji} \\
    k \sum_{i, j: \s(i) = \s(j) = c} u_{ji} & \geq \sum_{i \in c} \sum_{j} u_{ji} \\
    k \sum_c \sum_{i, j: \s(i) = \s(j) = c} u_{ji} & \geq \sum_{i,j} u_{ji} \qquad \text{(by summing over cliques $c$).}
  \end{align*}
  This gives us that the Price of Anarchy is less than or equal to $k$.

  The upper bound for altruistic PoA is very similar. The Nash condition for player $i$ deviating from clique $c$ to clique $c'$ is
  \[  \sum_{j: \s(j) = c} (u_{ij} + u_{ji}) \geq \sum_{j: \s(j) = c'} (u_{ij} + u_{ji}), \]
  since player $i$ now also cares about her effect on the other players, but her deviation does not affect players not in cliques $c$ or $c'$. As above, sum over $c'$, players $i$ in clique $c$, then $c$, and obtain:
   \begin{align*}
     k \sum_c \sum_{i, j: \s(i) = \s(j) = c} (u_{ij} + u_{ji}) & \geq \sum_{i,j} (u_{ij} + u_{ji}) \\
     2k \sum_c \sum_{i, j: \s(i) = \s(j) = c} u_{ji} & \geq 2 \sum_{i,j} u_{ji}.
   \end{align*}
   This gives us an upper bound of $k$ once again.

   For the lower bound consider the following game. Label the $2k$ players as $\{a_i\}_{i=1}^k \cup \{b_i\}_{i=1}^k$ and define the utilities (edge weights) to be
   \begin{align*}
     u_{a_i b_i} = u_{b_i a_i} &= 1 \quad \text{for all $i$} \\
     u_{a_i a_j} = u_{b_i b_j} &= 1 \quad \text{for all $i \neq j$} \\
     u_{xy} &=  0         \quad \text{otherwise}.
   \end{align*}
   Then consider a strategy $\s$ where the cliques are $\{a_i, b_i\}$ for all $i$. This is a Nash equilibrium: when a player $p$ deviate to a different clique, they would lose one unit of utility from their old clique and gain one from their new clique. This is also an altruistic Nash: when a player deviates, they lose two units of social welfare from the old clique and gain two in social welfare from the new clique. Computing the social welfare of $\s$ and the social optimum, $\s^*$, we get
   \[U(\s) = 2k \qquad U(\s^*) = 2k(k-1) + 2k = 2k^2. \]
   So the both the Price of Anarchy and the altruistic Price of Anarchy are $k$.
\end{proof}

\ycomment{move statement to main paper?} %todo 11
\begin{theorem} \label{thm:potkgrouping} The Price of Tribalism for social $k$-grouping games is exactly \[\PoT(\Tstar, \mathcal{F}_k) = 2k - 1.\]
\end{theorem}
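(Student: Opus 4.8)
The plan is to prove the two bounds $\PoT(\Tstar, \mathcal{F}_k) \geq 2k-1$ and $\PoT(\Tstar, \mathcal{F}_k) \leq 2k-1$ separately, generalising the argument already given for $k=2$ in Theorem \ref{thm:pot2grouping}.

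For the lower bound, I would construct a game on roughly $2k$ players (or a small constant times $k$) and exhibit a tribal Nash equilibrium whose social welfare is a $\frac{1}{2k-1}$ fraction of the optimum. The intuition from the $k=2$ case (Fig. \ref{fig:friendalt}, right) is that a defector must weigh the benefit to himself from a new clique against the sum of (i) the benefit to himself from his current clique and (ii) the benefit his current tribemates derive from him; the ``$2$'' in the $k=2$ bound came from each of these two terms being equal to the within-clique benefit, and the extra factor of $k-1$ should come from the $k$-clique analogue of Theorem \ref{thm:friendshipk}'s lower-bound gadget, where a player staying put forgoes friendships in $k-1$ other cliques. So I would take a cyclic/symmetric construction: $k$ cliques, and within the putative equilibrium each clique holds a ``cell'' of players, with directed edges arranged so that (a) each player's own gain from defecting to any other clique equals his own current-clique benefit, and (b) the tribe structure is chosen so that when player $i$ would defect, his tribemates in his current clique lose exactly one unit each but his tribemates in the target clique gain nothing (because they are not connected to him). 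Summing over all $k-1$ alternative cliques gives the factor $2k-1$ against the all-in-one-clique optimum. I expect the cleanest construction to be a ``blown-up'' version of the $k=2$ example: take $k$ copies of the four-node cycle idea, or more simply replicate the lower-bound game of Theorem \ref{thm:friendshipk} with the tribe-splitting trick applied.

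For the upper bound, I would mimic the summed-Nash-condition computation in the proof of Theorem \ref{thm:pot2grouping}. Fix a tribal Nash $\s$ and a clique $c$. For each player $i$ with $\s(i)=c$ and each alternative clique $c'$, the tribal Nash condition reads
$$ \sum_{j:\s(j)=c} u_{ji} + \sum_{j:\s(j)=c,\ \tau(j)=\tau(i)} u_{ij} \;\geq\; \sum_{j:\s(j)=c'} u_{ji} + \sum_{j:\s(j)=c',\ \tau(j)=\tau(i)} u_{ij}. $$
Summing over the $k-1$ choices of $c'\neq c$, then over all $i$ in clique $c$, then over all cliques $c$, the left side becomes $(k-1)\bigl(\sum_i\sum_{j:\s(i)=\s(j)}u_{ji} + \sum_i\sum_{j:\s(i)=\s(j),\tau(i)=\tau(j)}u_{ij}\bigr)$ and the right side becomes $\sum_i\sum_{j:\s(i)\neq\s(j)}u_{ji} + \sum_i\sum_{j:\s(i)\neq\s(j),\tau(i)=\tau(j)}u_{ij}$, where the last, within-tribe cross-clique term is non-negative and can be dropped. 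As in the $k=2$ proof, I then relax the within-tribe within-clique term on the left by the full within-clique term (all $u$'s non-negative), obtaining $2(k-1)W \geq S$, where $W := \sum_i\sum_{j:\s(i)=\s(j)}u_{ji}$ is the social welfare at $\s$ and $S := \sum_i\sum_{j:\s(i)\neq\s(j)}u_{ji}$ is the ``lost'' utility. Adding $W$ to both sides gives $(2k-1)W \geq W + S = \sum_{i,j}u_{ji} = U(\s^*)$, which is exactly the claimed bound.

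The main obstacle is the lower-bound construction: I need a single gadget that simultaneously (a) is a genuine tribal Nash, (b) has the tribe assignment interacting with the clique assignment so that at the equilibrium no defector's target clique contains a tribemate who would benefit, and (c) achieves the ratio $2k-1$ exactly rather than merely approaching it. The summed upper-bound inequality is tight only if every step is an equality, which pins down the structure quite rigidly (every player must be indifferent to deviating to every other clique, and the dropped within-tribe cross-clique term must vanish), so I would design the example backwards from those equality conditions — most likely by taking the Theorem \ref{thm:friendshipk} lower-bound game with its $k$ two-player cliques $\{a_i,b_i\}$, subdividing each friendship pair and reassigning tribes so that the factor-of-two slack from the tribal term is realised, then verifying the indifference condition layer by layer as in the $k=2$ figure.
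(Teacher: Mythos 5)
Your upper bound is correct and is essentially the paper's own argument: sum the tribal Nash condition over the $k-1$ alternative cliques, drop the non-negative within-tribe cross-clique term, bound the within-tribe within-clique term by the full within-clique welfare $W$, and conclude $2(k-1)W \geq S$, hence $(2k-1)W \geq U(\s^*)$. Nothing to add there.

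The lower bound, however, is left as a plan rather than a proof, and one of your stated design conditions would steer you to the wrong gadget. In (a) you ask that ``each player's own gain from defecting to any other clique equals his own current-clique benefit''; that is the \emph{selfish} indifference condition, and combined with (b) it makes every deviation strictly bad for the tribe while only certifying a ratio of $k$ (it is exactly the Theorem \ref{thm:friendshipk} example). The correct tightness condition is the one you state informally at the top of the paragraph: the defector's gain must equal his own current-clique benefit \emph{plus} the benefit his current-clique tribemates derive from him, which forces the cross-clique weights to be \emph{twice} the within-clique ones. Concretely, the paper takes the Theorem \ref{thm:friendshipk} gadget with players $\{a_i\}_{i=1}^k \cup \{b_i\}_{i=1}^k$, cliques \emph{and} tribes both equal to the pairs $\{a_i,b_i\}$, weights $u_{a_ib_i}=u_{b_ia_i}=1$ and $u_{a_ia_j}=u_{b_ib_j}=2$ for $i\neq j$ (all others zero). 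Then any unilateral deviation costs the deviator's tribe $2$ (one unit from each endpoint of the broken pair) and gains it $2$ (the single weight-$2$ edge into the new clique), so the profile is a tribal Nash with $U(\s)=2k$ against $U(\s^*)=4k(k-1)+2k=4k^2-2k$, giving exactly $2k-1$. Without pinning down these weights and verifying the indifference, the ``exactly'' in the theorem is not established.
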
 %todo 08
\begin{proof} (Sketch)
For the upper bound, the RHS of the first line in the proof of Theorem \ref{thm:pot2grouping} becomes a maximum over the $k-1$ possible alternative cliques that the $i$th player could join. So if we multiply the
inequality by $k-1$, the RHS is an upper bound on $$\sum_{c \neq \s(i)} \sum_{j:\s(j)=c} u_{ji} = \sum_{j:\s(j)\neq \s(i)} u_{ji}.$$ The rest of the proof proceeds identically except for the additional factor of $k-1$ on the LHS up to the line marked $(*)$, resulting in a factor of $2k-1$ rather than 3 on the final line.

\ycomment{I changed this}
For the lower bound, use the same construction as the one from Theorem \ref{thm:friendshipk} but with tribes $\{a_i, b_i\}$, for all $i$, and
   \begin{align*}
     u_{a_i a_j} = u_{b_i b_j} &= 2 \quad \text{for all $i \neq j$}.
   \end{align*}
   This strategy profile, $\s$, is a tribal Nash equilibrium since a player's tribe would lose utility 4 from her leaving a clique, but gain 4 from her joining any other. The social welfare of $\s$ compared to that of the optimum, $\s^*$, is
   \[ U(\s) = 2k \qquad U(\s^*) = 4k(k-1) + 2k = 4k^2 - 2k. \]
   This yields a Price of Tribalism of $2k-1$.
\end{proof}

\subsection{About network contribution games }
\label{sec:network}

\begin{theorem}
  \label{thm:NCGaltconvex}
  Suppose all reward functions are coordinate convex. Then the altruistic Price of Anarchy is equal to $2$.
\end{theorem}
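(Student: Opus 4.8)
The plan is to establish $\PoT(\Tn{1},\mathcal{N}_C)=2$ by a lower bound and a matching upper bound, reusing the framework of Theorem~\ref{lem:NCGconv4}. The key structural observation is that in the fully altruistic extension ($\tau$ constant) every player's subjective utility $u_i^\tau(\s)$ equals the social welfare $U(\s)$ itself, so a deviation that harms a ``tribe-mate'' is counted only once rather than twice; this is exactly what turns the factor $4$ of Theorem~\ref{lem:NCGconv4} into a factor $2$ here.

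For the lower bound, I would use the $4$-cycle on players $a,b,d,c$ (in this cyclic order) where the two edges $\{a,b\}$ and $\{c,d\}$ carry a coordinate-convex reward function $f$ with $f(x,0)=0$, the two opposite edges $\{a,c\}$ and $\{b,d\}$ carry $2f$, and every budget is $1$; as in Theorem~\ref{lem:NCGconv4}, this works for any reward class closed under scalar multiplication containing such an $f$. Since a sum of coordinate-convex functions is maximised over the box of budget-feasible profiles at a tight profile, and at a tight profile the edges funded on both ends form a matching of $C_4$, the optimum is the ``heavy'' perfect matching $\{\{a,c\},\{b,d\}\}$ with $U(\s^*)=2\bigl(2f(1,1)+2f(1,1)\bigr)=8f(1,1)$. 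The profile $\s$ in which every player invests its whole budget in its incident \emph{light} edge has $U(\s)=4f(1,1)$; I would verify it is a pairwise altruistic equilibrium, the point being that a lone player switching onto a heavy edge finds the other endpoint absent (it forfeits $2f(1,1)$ of welfare and gains nothing), while the only admissible joint move — the two endpoints of a heavy edge relocating together onto it — abandons two light edges, a welfare loss of $2\cdot2f(1,1)$, to create welfare $2\cdot2f(1,1)$, a wash (and coordinate convexity rules out any profitable non-tight variant). Hence the ratio is $2$.

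For the upper bound, I would fix a pairwise altruistic equilibrium $\s$ — a profile from which no single player and no adjacent pair can strictly increase $U$ — normalise each $f_e$ so that $f_e(0,0)=0$ (which only increases the ratio), and invoke Claim~2.10 of \cite{contribution-games} to assume $\s^*$ is tight, so each player is the \emph{witness} of the one edge carrying its whole budget in $\s^*$. For an edge $e=\{i,j\}$ with two witnesses, the bilateral deviation of $i,j$ to $(\s^*_i,\s^*_j)$ changes $U$ by at least $2\bigl(w_e(\s^*)+w_e(\s)-u_i(\s)-u_j(\s)\bigr)$: the welfare lost to $i$ and $j$ vacating their other edges is at most $2\bigl(u_i(\s)-w_e(\s)\bigr)+2\bigl(u_j(\s)-w_e(\s)\bigr)$, and the gain on $e$ is $2\bigl(w_e(\s^*)-w_e(\s)\bigr)\ge0$ by monotonicity, so equilibrium forces $w_e(\s^*)\le u_i(\s)+u_j(\s)$. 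For an edge $e=\{i,j\}$ with $i$ its only witness ($\s^*_j(e)=0$), the unilateral move of $i$ to $\s^*_i$ pushes $w_e$ up to $f_e(B_i,\s_j(e))\ge f_e(B_i,0)=w_e(\s^*)$, and the same bookkeeping gives $w_e(\s^*)\le u_i(\s)$. Summing over all edges in the support of $\s^*$ and using that each player witnesses at most one edge yields $\sum_e w_e(\s^*)\le\sum_i u_i(\s)=U(\s)$, hence $U(\s^*)=2\sum_e w_e(\s^*)\le2U(\s)$.

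I expect the delicate step to be the per-edge inequality in the upper bound: one has to separate cleanly the welfare lost by $i,j$ deserting their remaining edges — which, crucially, is not doubled as it was in the tribal argument — from the gain on $e$, and one must use a genuinely \emph{bilateral} deviation in the two-witness case, since a unilateral move of $i$ alone only raises $w_e$ to $f_e(B_i,\s_j(e))$, which need not dominate $w_e(\s^*)=f_e(B_i,B_j)$. Everything else is bookkeeping.
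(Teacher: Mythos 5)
Your proof is correct and follows essentially the same route as the paper's: the identical witness/charging argument for the upper bound (with the deviation's harm to the partner counted once rather than twice, yielding the factor $2$), and a $4$-cycle with alternating light/heavy edges for the lower bound. The only cosmetic difference is that your lower-bound instance realises the ratio $2$ exactly via welfare-neutral (tied) deviations, whereas the paper perturbs the weights by $\varepsilon$ so that all deviations are strictly losing and the ratio $2$ is approached in the limit.
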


\begin{proof}
   The proof follows from the same reasoning as the one for tribalism, except for a few computational differences. Since in all cases $i$ and $j$ are playing for the same ``tribe'', the increase utility received by $j$ also must be counted into the utility of the group.
   So in the case of edge $e$ where $\s_i^*(e) = B_i$ and $\s_j^*(e) = B_j$. We again mark $i$ and $j$ as witnesses for $e$, and have
   \[ U (\s) \geq U (\s_i^*; \s_j^*; \s) \geq U (\s) - 2 (u_i(\s) + u_j(\s)) + 2w_e (\s^*). \]
   For $e$ where $\s_i(e) = B_i$ and $\s_j (e) = 0$, we mark $j$  have
   \[ U (\s) \geq U (\s_i^*; \s) \geq U (\s) - 2 u_i (\s) + 2 w_e (\s^*). \]
   So $w_e(\s_*)$ is less than or equal to the utility of the witness of $e$ in strategy $\s$. Then we have
   \begin{align*}
    U(\s^*) = 2 \sum_e w_e(\s^*) \leq 2 \sum_{i \in V} u_i(\s) = 2 U(\s).
  \end{align*}

  The tightness of the upper bound can be shown with a simple example; consider a square graph with four nodes $\{a,b,c,d\}$ all with budget one, and four edges: $e_1 = \{a,b\}$, $e_2 = \{b,c\}$, $e_3 = \{c,d\}$, and $e_4 = \{d, a\}$. The reward functions are $f_{e_1}(x,y) = f_{e_3}(x,y) = (1- \varepsilon) xy$ and $f_{e_2}(x,y) = f_{e_4}(x,y) = \frac12 xy$. The arrangement that maximises social welfare is the one with each node investing their whole budget into the incident edge with payoff $(1-\epsilon)xy$. In this strategy, the social welfare is $4 (1-\epsilon)$. However, the strategy where every player invests in their incident edge with pay-off $\frac12 xy$ is an altruistic Nash equilibrium. Any unilateral deviation is clearly not beneficial, losing social welfare $2(1-\epsilon)$ for no gain. Bilateral deviations on $e_1$ and $e_3$ would lose society $2\epsilon$, and on $e_2$ and $e_4$ would lose society $2(1-\epsilon)$ utility. This strategy gives social welfare $2$, so the Price of Anarchy is at least $2$.
\end{proof}

\section{Tribes coordinating}
\label{sec:coord}

We briefly demonstrate how to interpret the approach of \cite{oliroute} with definitions designed to resemble our own.

\begin{definition} Let $G=(\P, (\Sigma_i)_{i \in \P}, (c_i)_{i\in \P})$ be a finite cost-minimisation game with a social cost function $C_G$.

Let $\tau:\P\rightarrow \mathds{N}$  %used to be N->[k], but I think this may be more general
be a function that assigns each player a unique tribe, identified e.g. by a natural number.
Then the acting players are the oligopolies themselves, with cost function being sum of the cost of the players in the oligopoly.
Specifically, the \emph{$\tau$-oligopolistic extension of $G$}

is the cost-minimisation game $$oG^\tau = \left[\tau (\P), \left[ \prod_{\tau(i) = t} \Sigma_i \right]_{t \in \tau(\P)} ,(c_t^\tau)_{t \in \tau(\P)} \right],$$ where the cost experienced
by each oligopoly is the sum of costs of all players in the same oligopoly in the original game:
for every $t \in \tau(\P)$ and $\vec{s}\in \Sigma = ( \prod_{\tau(i) = t} \Sigma_i )_{t \in \tau(\P)} $,
$$ c_t^\tau(\vec{s}) = \sum_{i\in \P: t=\tau(j)} c_i(\vec{s}). $$
\label{defn:oliexten}
\end{definition}

\begin{definition}
  Given a $\tau$-oligopolistic extension of a finite cost minimisation game $$G=\left[\tau(\P), \left[\prod_{\tau(i) = t}\Sigma_i\right]_{t \in \tau(\P)}, (c_t^\tau)_{t \in \tau(\P)}\right],$$ the strategy $\vs \in \Sigma$ is a (pure) \emph{oligopolistic Nash equilibrium} if for all $i \in \P$ and $\vs' \in (\prod_{\tau(i) = t}\Sigma_i)_{t \in \tau(\P)}$,
  $$c_t^\tau (\vs) \leq c_t^\tau (\vs'_t;\vs_{-t}).$$
  \label{defn:olinash}
\end{definition}

\begin{definition}
The pure \emph{coordinated Price of Tribalism} (PoT), or oligopolistic Price of Anarchy, of a class of games $\mathcal{G}$ and class of partition functions for each game $\T=\{\T_G\}_{G\in \mathcal{G}}$ to be
$$ \PoT(\T,\mathcal{G}) = \sup_{G\in \mathcal{G}, \tau\in \T_G} \frac{ \sup_{\vec{s}\in S_{oG^\tau}} C_G(\vec{s}) }{ \inf_{\vec{s}\in \Sigma} C_G(\vec{s}) }, $$
where $S_{oG^\tau}$ is the set of pure Nash equilibria of $oG^\tau$.

\label{defn:opoa}
\end{definition}

\begin{theorem}
  The coordinated Price of Tribalism for the social $2$-grouping game is $2$.
  \label{thm:oPoAgrouping}
\end{theorem}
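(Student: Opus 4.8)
The plan is to show that allowing each of the two tribes to coordinate the moves of all its members does not help escape the bad equilibrium we already exhibited for the selfish and (single-tribe) altruistic cases: the coordinated Price of Tribalism is still exactly $2$. The lower bound should come essentially for free from the example already on the table. Recall the four-cycle example of Figure \ref{fig:friendalt} (left), where all eight directed edges have weight $1$ and the equilibrium splits $\{a,d\}$ into clique $A$ and $\{b,c\}$ into clique $B$, achieving social welfare $4$ against an optimum of $8$. To use it here I would put each tribe equal to one of these pairs, say tribe $1=\{a,d\}$ and tribe $2=\{b,c\}$. Then a coordinated deviation by tribe $1$ is a joint re-choice of cliques by $a$ and $d$; one checks that no such joint move increases $u_a+u_d$: whatever $a$ and $d$ do among two cliques, the total benefit they collect from the friendship edges among $\{a,b,c,d\}$ stays at $2$ (each of $a,d$ has exactly one in-edge of weight $1$, and these come from $b$ and $c$, which are themselves split, so at most one of $a,d$ can ever share a clique with each of $b,c$ — a short case check over the three essentially distinct joint strategies closes this). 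The same holds for tribe $2$. Hence the split is an oligopolistic Nash equilibrium with welfare ratio $8/4 = 2$, giving $\PoT \ge 2$ in the coordinated sense.

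For the upper bound I would mimic the summed-Nash-condition argument in the proof of Theorem \ref{thm:socialaltruism} / Theorem \ref{thm:pot2grouping}, but applied at the level of oligopolies rather than individuals, and exploiting that a tribe can always deviate by moving \emph{all} of its members into a single target clique. Fix an oligopolistic Nash equilibrium $\s$ and a tribe $t$. Consider the deviation in which every member of $t$ moves to clique $A$, and the deviation in which every member moves to clique $B$; summing the two Nash inequalities for $t$, every intra-tribe friendship pair contributes on both sides (since after such a deviation all of $t$ is in one clique, every ordered intra-tribe pair is realized in one of the two target profiles), so those terms cancel and we are left with a bound of the form: twice the intra-tribe, intra-clique utility currently enjoyed by $t$ is at least the total benefit members of $t$ receive from \emph{out-of-tribe} players who happen to share their clique in $\s$, plus (from the two target profiles) the total out-of-tribe benefit they could get in the best single clique — and since each player's own current utility $u_i(\s)$ already counts the intra-tribe-intra-clique part, this should sharpen to $2\sum_{i\in t} u_i(\s) \ge \sum_{i\in t}\sum_{j:\,\s(j)\ne\s(i)} u_{ji}$ after accounting for the out-of-tribe same-clique terms that appear on both sides. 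Summing over tribes $t$ gives $2U(\s)\ge \sum_i\sum_{j:\s(j)\ne\s(i)} u_{ji} = U(\s^\ast) - U(\s)$, i.e. $U(\s^\ast)\le 3U(\s)$ — which is the bound $3$, not $2$. So the naive copy of the Theorem \ref{thm:pot2grouping} argument is too weak, and the improvement to $2$ must come from the stronger move set.

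The improvement, and the main obstacle, is to extract the extra factor from the fact that a tribe can move to the clique that is \emph{best against the outside world alone}. Concretely, with only two cliques, for each tribe $t$ consider the "all to $A$" and "all to $B$" deviations and note that the out-of-tribe benefit available to $t$ in profile $\s$ is split between the two cliques; the coordinated move can chase the larger half, so $\sum_{i\in t}\sum_{j:\s(j)=\s(i),\ \tau(j)\ne t} u_{ji}$ is at most (roughly) the out-of-tribe benefit $t$ could collect by massing into one clique, and feeding this back into the summed Nash inequality should let the out-of-tribe-same-clique terms on the two sides cancel rather than merely bound each other one-sidedly, promoting the factor from $3$ to $2$. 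I expect the delicate point to be bookkeeping the directed (asymmetric) weights $u_{ij}$ vs.\ $u_{ji}$ and the intra-tribe pairs correctly when collapsing a whole tribe into one clique; once the inequality $2\sum_{i\in t} u_i(\s)\ge \sum_{i\in t}\sum_{j:\s(j)\ne \s(i)} u_{ji} + \sum_{i\in t}\sum_{j:\s(j)= \s(i),\ \tau(j)\ne t} u_{ji}$ is established per tribe, summing over $t$ yields $2U(\s) \ge \sum_i \sum_{j:\s(j)\ne\s(i)} u_{ji} = U(\s^\ast) - U(\s)$ only after noticing the cross-clique out-of-tribe terms double-count, and a careful reindexing gives the clean $U(\s^\ast)\le 2U(\s)$. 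Combining with the lower bound from the four-cycle completes the proof that the coordinated Price of Tribalism equals $2$.
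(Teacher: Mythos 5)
There are two genuine gaps here, one in each direction of the bound.

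\textbf{Lower bound.} Your proposed example does not work: with tribes $\{a,d\}$ and $\{b,c\}$ coinciding with the cliques of the left-hand configuration of Fig.~\ref{fig:friendalt}, the profile is \emph{not} an oligopolistic equilibrium. You assert that each of $a,d$ has exactly one in-edge, coming from $b$ or $c$, but in the four-cycle every player has \emph{two} unit-weight in-edges: $a$ benefits from both $b$ and $d$, and $d$ from both $a$ and $c$. Hence if tribe $\{a,d\}$ jointly moves into clique $B$, each of its members keeps the benefit from the other \emph{and} gains the benefit from $b$ (resp.\ $c$), raising the tribe's utility from $2$ to $4$ --- a strictly profitable coordinated deviation. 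The fix (and what the paper does) is to split each tribe across the two cliques: take tribes $\{a,c\}$ and $\{b,d\}$ with cliques $\{a,d\}$ and $\{b,c\}$. Then a short case check over the joint moves of, say, $\{a,c\}$ shows its utility stays at $2$ no matter what, because $a$ and $c$ are not adjacent and each can be adjacent to at most one same-clique in-neighbour in every joint configuration. This gives welfare $4$ against the optimum $8$ and hence the ratio $2$.

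\textbf{Upper bound.} Your choice of deviations (each tribe consolidating entirely into clique $A$, and entirely into clique $B$) is the right one, but the derivation is not carried through: the per-tribe inequality you write down, summed over tribes, only yields $2U(\s)\geq U(\s^*)-U(\s)$, i.e.\ the factor $3$, and the ``careful reindexing'' promised to improve this to $2$ is never exhibited. The point you are missing is that after \emph{either} consolidation deviation the whole tribe sits in one clique, so \emph{every} ordered intra-tribe pair contributes to the right-hand side of \emph{both} Nash inequalities (not merely ``one of the two''), while the out-of-tribe benefit from clique $A$ appears in one inequality and that from clique $B$ in the other. Adding the two therefore gives directly
$$ 2\sum_{i\in t}u_i(\s)\;\geq\;\sum_{i\in t}\sum_{j\notin t}u_{ji}+2\sum_{i\in t}\sum_{j\in t,\,j\neq i}u_{ji}\;\geq\;\sum_{i\in t}\sum_{j\neq i}u_{ji}, $$
and the right-hand side is exactly tribe $t$'s utility at the social optimum; summing over tribes gives $2U(\s)\geq U(\s^*)$ with no cancellation or reindexing needed. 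There is no intermediate factor-$3$ stage to repair.
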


\begin{proof}
For the lower bound, consider the following configurations:
\begin{center}
\begin{tikzpicture}[scale=1.5]
 \node [inner sep=0.1em]  (a) at (0,0) {\color{red} a};
 \node [inner sep=0.1em]  (b) at (2,0) {\color{blue} b};
 \node [inner sep=0.1em]  (c) at (2,1) {\color{red} c};
 \node [inner sep=0.1em]  (d) at (0,1) {\color{blue} d};

 \node (A) at (0,1.8) {A};
 \draw (0,0.5) ellipse (0.5 and 1);

 \node (B) at (2,1.8) {B};
 \draw (2,0.5) ellipse (0.5 and 1);

 \draw[->] (a) to [bend left=20] node[above] {$1$} (b);
 \draw[->] (b) to [bend left=20] node[left] {$1$} (c);
 \draw[->] (c) to [bend left=20] node[below] {$1$} (d);
 \draw[->] (d) to [bend left=20] node[right] {$1$} (a);
 \draw[<-] (a) to [bend right=20] node[below] {$1$} (b);
 \draw[<-] (b) to [bend right=20] node[right] {$1$} (c);
 \draw[<-] (c) to [bend right=20] node[above] {$1$} (d);
 \draw[<-] (d) to [bend right=20] node[left] {$1$} (a);
\end{tikzpicture}
\end{center}
In this configuration each oligopoly has utility $4$, and if either oligopoly move their players into the same clique, they would continue to have utility $4$.
By symmetry, switching cliques of the both players also doesn't affect the utility.
So the oligopolistic Price of Anarchy is at least 2.

For the upper bound, first consider a fixed oligopoly $t$.
Let $\vec s = \Sigma_1 \times \dotsb \times \Sigma_N \to \{0,1\}$ be a Nash equilibrium assigning each member to a clique,
and $I_0 = \{ x \in \tau^{-1}(t) : \vec s (x) = 0\}$ the members of $t$ who are in clique $0$.
Then by the Nash condition, we know that if $t$ switches the clique of all members in $t$ of clique $0$ to clique $1$, it would only decrease the utility of $t$.
Thus, we have
\begin{align*}
  \sum_{i: \tau(i) = t} \sum_{j: \vec s (j) = \vec s (i)} u_{ji} &\geq
    \sum_{i: \tau(i) = t \land i \in I_0} \left( \sum_{j: \vec s (i) \neq \vec s (j)} u_{ji} + \sum_{j \in I_0} u_{ji} \right)  \\
    & \qquad + \sum_{i: \tau(i) = t \land i \not \in I_0 } \left( \sum_{j: \vec s (i) = \vec s (j)} u_{ji} + \sum_{j \in I_0} u_{ji} \right)
\end{align*}
where we assume that $u_{ii} = 0$. If we repeat the same argument to $\tau^{-1}(t) \setminus I_0$, and add it to the above inequality, we get
\begin{align*}
  2\sum_{i: \tau(i) = t} \sum_{j: \vec s (j) = \vec s (i)} u_{ji} &\geq
    \sum_{i: \tau(i) = t \land i \in I_0} \left( \sum_{j: \vec s (i) \neq \vec s (j)} u_{ji} + \sum_{j: \vec s (i) = \vec s (j)} u_{ji} + \sum_{j \in \tau^{-1}(t)} u_{ji} \right)  \\
    & \qquad + \sum_{i: \tau(i) = t \land i \not \in I_0 } \left( \sum_{j: \vec s (i) = \vec s (j)} u_{ji} + \sum_{j: \vec s (i) \neq \vec s (j)} u_{ji} + \sum_{j \in \tau^{-1}(t)} u_{ji} \right)
\end{align*}
Then for each oligopoly $t$, we have
\begin{equation}
2\sum_{i: \tau(i) = t} \sum_{j: \vec s (j) = \vec s (i)} u_{ji} \geq \sum_{i: \tau(i) = t} \sum_{j} u_{ji}
\end{equation}
Summing over $t$, we get $2 C(\vec s)$ on the left and $C(\vec s^*)$ on the right, so the oligopolistic Price of Anarchy is at most $2$.
\end{proof}

\end{document}